\newtheorem{thm}{Theorem}
\newtheorem{lem}{Lemma}
\newtheorem{rem}{Remark}
\newtheorem{prop}{Proposition}
\theoremstyle{definition}
\newtheorem{definition}{Definition}
\newtheorem{assumption}{Assumption}
\DeclareMathOperator*{\rank}{rank}
\DeclareMathOperator*{\diag}{diag}
\DeclareMathOperator*{\tr}{tr}
\DeclareMathOperator*{\cl}{\mbox{\tiny\;cl}}
\DeclareMathOperator*{\sls}{\mbox{\tiny\;sls}}
\renewcommand{\Re}{\mathbb{R}}
\renewcommand{\paragraph}[1]{\smallskip\noindent\textbf{#1.} }
\newcommand{\BM}{\begin{bmatrix}}
\newcommand{\EM}{\end{bmatrix}}
\newcommand{\BBM}{\big[\begin{matrix}}
\newcommand{\EEM}{\end{matrix}\big]}
\newcommand{\bbm}{[\begin{matrix}}
\newcommand{\eem}{\end{matrix}]}
\newcommand{\rev}[1]{\textcolor{black}{#1}}  
\title{Interval-valued estimation for discrete-time linear systems: application to  switched systems}
\author{Laurent Bako\thanks{L. Bako is with Universit\'{e} de Lyon, Amp\`{e}re (Ecole Centrale Lyon, INSA Lyon, Universit\'{e} Claude Bernard, CNRS UMR 5005), France. E-mail: laurent.bako@ec-lyon.fr.}  
 \hspace{1pt} and Vincent Andrieu\thanks{V. Andrieu is with Universit\'{e} de Lyon, LAGEP (Universit\'{e} Claude Bernard, CNRS UMR 5007), France. E-mail: vincent.andrieu@gmail.com}
}
\date{}
\begin{document}

\maketitle   
\begin{abstract}
\noindent This paper proposes a new framework for constructing interval-valued state estimators for discrete-time linear and switched linear systems.  Our main results are (i) the derivation of the tightest interval-valued estimator for linear discrete-time systems; (ii) a systematic design method for interval-valued state estimators; (iii) an application of the proposed estimation framework to switched linear systems.
\end{abstract}

\paragraph{Keywords} state estimation, interval-valued estimator, discrete-time switched linear systems.

\setstretch{1.2}

\section{Introduction}
Recovering the hidden state of  a dynamic process is a problem of major importance in many decision-making systems e.g., control or monitoring systems. State estimation methods rely on a mathematical model of the system which relates the unknown state to the  observed input and output of the system. However, often in practice, the model and the observed signals are uncertain and hence described by set-valued (in particular $\Re^n$ interval-valued) functions of time. As initiated in \cite{Gouze00-EM}, this form of uncertainty prompts the necessity of constructing interval-valued estimators which return at any time instant the set of all the possible values of the state. 
This type of estimators (observers) were investigated in a series of  papers for a variety of system classes:   continuous-time Linear Time Invariant (LTI) \cite{Bako18-CDC,Briat16-Automatica,Mazenc11-Automatica,Cacace15-TAC,Combastel13-TAC,Meslem16-CoDIT}, discrete-time linear systems \cite{Tang19-TAC,Efimov13-TAC,Mazenc14-IJNRC,Mazenc13-Automatica}, linear parameter-varying systems \cite{Chebotarev15-Automatica,Efimov13-TAC-b} and some specific classes of nonlinear systems \cite{Raissi12-TAC,Moisan10-SCL}. For more on the interval observer literature we refer to a recent survey reported in \cite{Efimov16-ARC}.

We consider in this paper the problem of designing interval-valued estimators for two classes of dynamic systems: discrete-time linear systems (LS) and switched linear systems (SLS). Concerning linear systems, many estimators exist in the literature as recalled above.  However, a question of major importance that has not received much attention so far is that of the size  of the estimated interval set. In effect, there exist in principle infinitely many interval estimators that satisfy the outer-bounding condition for the state trajectories of the system of interest.  But the best estimator would be the one that generates  the smallest possible interval sets  (in some sense) that contains the actual state. We therefore consider here the question of how to characterize the tightest interval-valued estimator for discrete-time linear systems. Note that this question was also tackled in  our previous works \cite{Bako18-CDC,Bako19-Automatica} but for the case of continuous-time LTI systems. The discrete-time case is sufficiently different to deserve a separate treatment. 
Note that the concept of tightness (of the interval-valued estimator) introduced here is different from the  notion of optimal design used in, for example, \cite{Dinh19-IJC,Wang18-SCL,Briat16-Automatica,Tang19-TAC}. The first characterizes the intersection of all intervals which contain the true state while the second refers to the selection of the parameters which optimize a certain performance function (e.g., peak-to-peak gain or $H_\infty$ norm) for a given parametrized structure of estimators. Indeed both approaches can be nicely combined as follows: select an optimal gain of the observer in the sense of some performance index an then construct the tightest estimator for this value of the gain using the approach of the current paper.   
Our approach leads to a tight estimator in convolutional form whose numerical implementation may, in the general case, be computationally costly. We therefore briefly discuss its realizability  by an LTI state-space form. Contrary to the continuous-time case, the realization problem  is numerically tractable in discrete-time. In case  a state-space realization does not exist, we consider some over-approximations of the tightest estimator.

The second part of the paper extends the discussion to switched linear systems. 
\noindent An increasing number of interval observer design methods are being proposed for this class of systems  (see e.g., \cite{Ethabet18-Automatica,Ifqir17-IFAC,Dinh19-IJC,Guo17-IFAC}). The main challenge that still needs to be overcome for most of these methods  is the conservatism of the design conditions. Here, we first discuss the expression of the tightest estimator for switched linear systems. We then consider a less computationally demanding relaxation of this tightest estimator  and propose a potentially less conservative design condition for the existence of a  stable interval-valued estimator under arbitrary switchings. Moreover, the search for the  estimator's parameters is numerically tractable and more precisely, expressible in the form of a convex feasibility problem.

\paragraph{Outline}The remainder of the paper is organized as follows. In Section \ref{sec:Preliminaries}, we set up the estimation problem and present the technical material to be used later for designing the estimator. 
In Section \ref{sec:Open-Loop} we discuss estimators in open-loop, that is, estimators that result only from the simulation of the state transition equation without any use of the measurement. Section \ref{sec:Closed-Loop} discusses a systematic way of transforming a classical observer into an interval-valued estimator in the new  estimation framework. Section \ref{sec:sls} considers an extension of the design method to switched linear systems.  Section \ref{sec:Simulations} reports some numerical results confirming tightness of the proposed estimator. We conclude the paper in Section \ref{sec:Conclusion}.  

\paragraph{Notations}
$\Re$ (resp. $\Re_+$) is the set of real (resp. nonnegative real) numbers; $\mathbb{Z}$ (resp. $\mathbb{Z}_+$) is the set of  (resp. nonnegative) integers. For  a real  number  $x$, $|x|$ will refer to the absolute value  of $x$.  For $x=\bbm x_1 & \cdots & x_n\eem^\top\in \Re^n$, $\left\|x\right\|_p$ will denote the $p$-norm of $x$ defined by $\left\|x\right\|_p=(|x_1|^p+\cdots+|x_n|^p)^{1/p}$, for $p\geq 1$. In particular for $p=\infty$, $\left\|x\right\|_\infty=\max_{i=1,\ldots,n}\left|x_i\right|$. For a matrix $A\in \Re^{n\times m}$, $\left\|A\right\|_F$ is the Frobenius norm of $A$ defined by  $\left\|A\right\|_F=\tr(A^\top A)^{1/2}$ (with $\tr$ referring to the trace of a matrix). 
 If $A=[a_{ij}]$ and $B=[b_{ij}]$ are real matrices of the same dimensions, the notation $A\leq B$ will be understood as an elementwise inequality on the entries, i.e., $a_{ij}\leq b_{ij}$ for all $(i,j)$.  $|A|$ corresponds to the matrix $[|a_{ij}|]$ obtained by taking the absolute value of each entry of $A$. For a square matrix $A$, $\rho(A)$ will refer to spectral radius of $A$ ; for a finite set $\Sigma=\left\{A_1,\ldots,A_m\right\}$ of square matrices, $\rho(\Sigma)$ will be the joint spectral radius of $\Sigma$ and $|\Sigma|$ will refer to the set $\left\{|A_1|,\ldots,|A_m|\right\}$.  
In case $A$ and $B$ are real square  matrices, $A\succeq B$ (resp. $A\succ B$) means that $A-B$ is positive semi-definite (resp. positive definite).

\section{Preliminaries}\label{sec:Preliminaries}
\subsection{Estimation problem settings}
Consider a Linear Time Invariant (LTI) system described by
\begin{equation}\label{eq:LTI}
	\begin{array}{rcl}
		x(t+1)&=&Ax(t)+Bw(t) \\
		y(t)&=& Cx(t)+v(t), 
	\end{array}
\end{equation}
where $x(t)\in \Re^{n}$, $w(t)\in \Re^{n_w}$, $y(t)\in \Re^{n_y}$, are respectively the state, control input and output at time $t\in \mathbb{Z}_+$;  $\left\{v(t)\right\}\subset\Re^{n_y}$ are unknown but bounded disturbances. $A\in \Re^{n\times n}$, $B\in \Re^{n\times n_w}$,  $C\in \Re^{n_y\times n}$ are some real matrices.  

Throughout the paper we make the following assumption concerning the external signals acting in system \eqref{eq:LTI}.  
\begin{assumption}\label{assum:Bounding}
There exist (known) bounded sequences $\left\{(\underline{w}(t), \overline{w}(t))\right\}$ and $\left\{(\underline{v}(t), \overline{v}(t))\right\}$ such that 
$\underline{w}(t)\leq w(t)\leq  \overline{w}(t) $ and $\underline{v}(t)\leq v(t)\leq  \overline{v}(t) $ for all \rev{$t\in\mathbb{Z}_+$.}
\end{assumption}

To begin with, let us fix some notation. Consider two vectors $\underline{x}$ and $\overline{x}$  in $\Re^n$ such that $\underline{x}\leq \overline{x}$ with the inequality holding componentwise.  An interval $\interval{\underline{x}}{\overline{x}}$ of $\Re^n$ is the subset defined by
\begin{equation}\label{eq:Interval}
	\interval{\underline{x}}{\overline{x}}= \big\{x\in \Re^n: \underline{x}\leq x \leq \overline{x}\big\}. 
\end{equation} 
We consider in this paper the problem of designing an \textit{interval-valued estimator} for the state of the LTI system \eqref{eq:LTI}. 
Considering that the initial state $x(0)$ of \eqref{eq:LTI} lives in an interval of the form $[\underline{x}(0), \overline{x}(0)]\subset\Re^{n}$ and that the external signals $w$ and $v$ are subject to Assumption \ref{assum:Bounding}, we want to estimate upper and lower bounds $\overline{x}(t)$ and $\underline{x}(t)$ for all the possible state trajectories of the uncertain system \eqref{eq:LTI}. 
\begin{definition}[Interval estimator]\label{def:Interval-Estimator}
Consider the system \eqref{eq:LTI} and pose $b_w(t)=\bbm \underline{w}(t)^\top & \overline{w}(t)^\top\eem^\top$, $b_v(t)=\bbm \underline{v}(t)^\top & \overline{v}(t)^\top\eem^\top$. Further, let $W^t = \big(b_w(0), \ldots,b_w(t)\big)$, $V^t=\big(b_v(0), \ldots,b_v(t)\big)$ and $Y^t=\big(y(0), \ldots,y(t)\big)$. Consider a dynamical system defined by 
\begin{equation}\label{eq:interval-estimator}
	\begin{aligned}
		&\underline{x}(t)=F_t(W^t,V^t,Y^t,X^0) \\
		&\overline{x}(t)=G_t(W^t,V^t,Y^t,X^0)
	\end{aligned} 
\end{equation}
where $F_t$ and $G_t$ are some functions indexed by time,  $(\underline{x}(t), \overline{x}(t))$ denote the output for any $t\in \mathbb{Z}_+$ and  $X^0=(\underline{x}(0), \overline{x}(0))$.  
The system \eqref{eq:interval-estimator} is called an \textit{interval-valued estimator} for system \eqref{eq:LTI} if: 
\begin{enumerate}
	\item Any state trajectory $x$ of \eqref{eq:LTI} satisfies   $\underline{x}(t)\leq x(t)\leq  \overline{x}(t) $  for all $t\in \mathbb{Z}_+$, whenever $\underline{x}(0)\leq x(0)\leq  \overline{x}(0) $
	\item \eqref{eq:interval-estimator} is Bounded Input-Bounded Output (BIBO) stable. 
\end{enumerate}
\end{definition}
\noindent Here the inputs of system \eqref{eq:interval-estimator} are the signals $b_w$, $b_v$, $y$ and the vector $X^0$. By BIBO stability we mean here that $\underline{x}$ and $\overline{x}$ in \eqref{eq:interval-estimator} are bounded whenever those input signals have bounded infinity norms. 

\begin{rem}
The estimator as expressed in \eqref{eq:interval-estimator} is in the general form of a causal dynamic system. As such it is not necessarily realizable in the form of a discrete-time state-space representation. This is why we will prefer throughout this paper the terminology \textit{ interval-valued estimator} to \textit{ interval observer} (the latter being conventionally used to refer to a state-space realization of the estimator, when it exists). 
\end{rem}

We will discuss two types of interval-valued estimators: open-loop interval estimators (or simulators) where \eqref{eq:interval-estimator} does not depend on the measurements $Y^t$ and the measurement noise $V^t$ ; and closed-loop interval estimators where measurement is fed back to the estimator.  
There are in principle infinitely many estimators that qualify as interval estimators in the sense of  Definition \ref{def:Interval-Estimator}. It is therefore desirable to define a performance index (measuring e.g. the size of the estimator) which selects the best estimator among all. We will be interested here in the smallest interval estimator in the following sense. 
\begin{definition}\label{def:smallest-interval}
Let $\mathcal{S}$ denote a subset of $\Re^n$. An interval $\mathcal{I}_{\mathcal{S}}\subset\Re^n$ is called the tightest interval containing $\mathcal{S}$ if $\mathcal{S}\subset \mathcal{I}_{\mathcal{S}}$ and if for any interval $\mathcal{J}$ of $\Re^n$, 
$\mathcal{S}\subset \mathcal{J} \: \Rightarrow \: \mathcal{I}_{\mathcal{S}} \subset \mathcal{J}.$
\end{definition}
\noindent In other words, the tightest interval $\mathcal{I}_{\mathcal{S}}$ "generated" by $\mathcal{S}$ is the intersection of all intervals containing $\mathcal{S}$. 
\subsection{Preliminary material on interval representation}\label{subsec:Interval-Representation} 
An important observation for future developments of the paper is that an interval $\interval{\underline{x}}{\overline{x}}$ of  $\Re^n$ can be equivalently represented by
\begin{equation}
	\mathcal{C}(c_x,p_x)=\big\{c_x+\diag\big(p_x\big) \alpha : \alpha\in \Re^n, \: \left\|\alpha\right\|_\infty \leq 1\big\}
\end{equation}
where 
 \begin{equation}
	 c_x=\dfrac{\overline{x}+\underline{x}}{2},  \quad p_x = \dfrac{\overline{x}-\underline{x}}{2}
 \end{equation}
The notation $\diag(v)$ for a vector $v\in \Re^n$ refers to  the diagonal matrix whose diagonal elements are the entries of $v$.   We will call the so-defined $c_x$ the center of the interval $\interval{\underline{x}}{\overline{x}}$ and $p_x$ its radius. To sum up, the interval set can be equivalently represented by the pairs $(\underline{x},\overline{x})\in \Re^n\times \Re^n$ and $(c_x,p_x)\in \Re^n\times \Re_+^n$  and $\interval{\underline{x}}{\overline{x}}=\mathcal{C}(c_x,p_x)$.   Finally, it will be useful to keep in mind that $\underline{x}=c_x-p_x$ and $\overline{x}=c_x+p_x$. 

The following lemma is a key result for our derivations. 
\begin{lem}\label{lem:Az+w}
Let $M\in \Re^{n\times m}$, $N\in \Re^{n\times n_f}$, $(\underline{f},\overline{f})\in \Re^{n_f}\times \Re^{n_f}$ and $(\underline{z},\overline{z})\in \Re^m\times \Re^m$ such that $\underline{f}\leq \overline{f}$ and $\underline{z}\leq \overline{z}$.  
Consider the set $\mathcal{I}$  defined by 
\begin{equation}\label{eq:x=Az+Bw}
	\mathcal{I} = \Big\{Mz+Nf: \underline{z}\leq z \leq \overline{z}, \: \underline{f}\leq f \leq \overline{f} \Big\}.
\end{equation}
Define the vectors $(c,p)$ by
 \begin{equation}\label{eq:C(c,p)}
	 \begin{aligned}
		 &c = Mc_z+Nc_f \\
		 &p = \left|M\right|p_z+\left|N\right|p_f, 
	\end{aligned}
\end{equation}
 with $p_z=(\overline{z}-\underline{z})/2$ and $p_f=(\overline{f}-\underline{f})/2$.\\
Then $\interval{c-p}{c+p}$  is the tightest interval containing $\mathcal{I}$ in the sense of Definition \ref{def:smallest-interval}. 
\end{lem}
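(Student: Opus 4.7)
The plan is to prove the result in two steps: first establish containment $\mathcal{I}\subset [c-p,c+p]$, then argue that every interval containing $\mathcal{I}$ must itself contain $[c-p,c+p]$ (which, by Definition \ref{def:smallest-interval}, identifies the tightest interval).

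For containment, I would exploit the equivalent parameterization from Section \ref{subsec:Interval-Representation}. Any $z\in[\underline{z},\overline{z}]$ and $f\in[\underline{f},\overline{f}]$ can be written as $z=c_z+\diag(p_z)\alpha$ and $f=c_f+\diag(p_f)\beta$ with $\|\alpha\|_\infty\le 1$, $\|\beta\|_\infty\le 1$. Substituting gives
\begin{equation*}
Mz+Nf = c + M\diag(p_z)\alpha + N\diag(p_f)\beta.
\end{equation*}
Taking absolute values entrywise and using the triangle inequality together with $|\alpha|\le \mathbf{1}$, $|\beta|\le \mathbf{1}$ and $p_z,p_f\ge 0$, one obtains $|Mz+Nf-c|\le |M|p_z+|N|p_f = p$, so $c-p\le Mz+Nf\le c+p$, i.e.\ $\mathcal{I}\subset[c-p,c+p]$.

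For tightness, I would show that every coordinate bound is \emph{attained} by a point of $\mathcal{I}$. Fix a coordinate $i$, and for the upper bound choose $\alpha_j=\sign(M_{ij})$ and $\beta_k=\sign(N_{ik})$ (both of infinity norm at most $1$). Then the $i$-th component of $M\diag(p_z)\alpha+N\diag(p_f)\beta$ equals $\sum_j|M_{ij}|p_{z,j}+\sum_k|N_{ik}|p_{f,k}=p_i$, so $(c+p)_i$ is reached by a specific element of $\mathcal{I}$. The symmetric choice attains $(c-p)_i$. Consequently, any interval $\mathcal{J}=[\underline{j},\overline{j}]$ containing $\mathcal{I}$ must satisfy $\overline{j}_i\ge (c+p)_i$ and $\underline{j}_i\le (c-p)_i$ for every $i$, which yields $[c-p,c+p]\subset\mathcal{J}$ and concludes the proof.

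The step I expect to require the most care is the attainment argument: one must be slightly careful when $M_{ij}=0$ or $N_{ik}=0$ (any choice of $\sign$ works, the contribution being zero anyway) and must verify that the two independent sign selections for $\alpha$ and $\beta$ correspond to a single admissible pair $(z,f)\in[\underline{z},\overline{z}]\times[\underline{f},\overline{f}]$. The containment step is essentially a direct triangle inequality calculation and should be routine.
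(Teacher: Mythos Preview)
Your argument is correct: the containment step is a clean application of the center--radius parameterization and the triangle inequality, and the tightness step correctly exhibits, for each coordinate $i$, an admissible pair $(z,f)$ attaining $(c\pm p)_i$, which forces any enclosing interval to contain $[c-p,c+p]$. The care you flag about $M_{ij}=0$ or $N_{ik}=0$ is harmless (any sign works), and the chosen $(\alpha,\beta)$ indeed corresponds to a valid $(z,f)$ since $\|\alpha\|_\infty\le 1$, $\|\beta\|_\infty\le 1$.

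As for comparison with the paper: the paper does not give an independent proof here but simply invokes Lemma~1 of \cite{Bako19-Automatica}, stating the present lemma as a simpler restatement of that earlier result. Your write-up is therefore more self-contained; the direct componentwise attainment argument you use is the natural route and is essentially what underlies the cited lemma as well, so there is no substantive methodological difference---you have just unpacked what the paper delegates to a reference.
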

\begin{proof}
The lemma is a simpler restatement of Lemma 1 in \cite{Bako19-Automatica}.
\end{proof}
\begin{rem}\label{rem:psi(A)}
If we let $\underline{x}=c-p$ and $\overline{x}=c+p$ with $(c,p)$ defined in \eqref{eq:C(c,p)}, then  $\interval{\underline{x}}{\overline{x}}=\mathcal{C}(c,p)$ and 
\begin{equation}\label{eq:BarX-BarZ}
	\BM\underline{x} \\ \overline{x}\EM=
	\Psi(M) \BM\underline{z} \\  \overline{z}\EM+\Psi(N)\BM \underline{f}\\ \overline{f}\EM
\end{equation}
with the notation $\Psi(\cdot)$ being defined by 
\begin{equation}\label{eq:PSI}
	\Psi(M)=\BM 
		\dfrac{M+\left|M\right|}{2} & \dfrac{M-\left|M\right|}{2}\\ 
		\dfrac{M-\left|M\right|}{2} & \dfrac{M+\left|M\right|}{2}
		\EM
\end{equation}
for any matrix $M$. If $M\in \Re^{n\times m}$, then $\Psi(M)$ is a matrix of dimensions $2n\times 2m$. 
It is particularly useful to note that the center-radius representation $(c,p)$ of $\interval{\underline{x}}{\overline{x}}$ follows from $(\underline{x},\overline{x})$ by a simple coordinates change as follows: 
\begin{equation}\label{eq:Similar-Tranform}
	\BM c\\ p\EM = T_n^{-1}\BM \underline{x} \\ \overline{x} \EM \: \: \mbox{ and } \: \:  
	T_n^{-1}\Psi(M)T_m= \BM M & 0\\ 0 & |M|\EM 
\end{equation}
where $T_n$ refers to the nonsingular matrix given by 
$$
	T_n=\BM I_n & -I_n\\ I_n & I_n\EM, 
$$
with $I_n$ denoting the identity matrix of order $n$. In the case where $M$ is a square matrix, then $n=m$ so that \eqref{eq:Similar-Tranform} implies that $\Psi(M)$ is similar to a block diagonal matrix.  
\end{rem}
We note that the first part of the statement of Lemma \ref{lem:Az+w} (i.e., the fact that $\mathcal{I}\subset \mathcal{C}(c,p)$)  also appeared in \cite{Efimov13-Automatica} and was proved using  a different line of arguments. The approach taken here is tailored to proving the second part, namely the fact that $\mathcal{C}(c,p)$ is indeed the tightest interval containing $\mathcal{I}$. This is  a key result in determining the tightest interval-valued state estimator.  
\subsection{Additional technical material}
For easier reference in this paper, we now state a couple of facts to be used in the stability proofs and in the bounding conditions. The next lemma recalls some useful basic facts from \cite[Chap. 8]{Horn85-Book}. 
\begin{lem}\label{lem:identities}
Let $A$ and $B$ be matrices of compatible dimensions. Then the following properties hold (componentwise): 
\begin{subequations}
\begin{align}
	&\left|A+B\right|\leq \left|A\right|+\left|B\right| \label{eq:sum}\\
	&\left|AB\right|\leq \left|A\right|\left|B\right|  \label{eq:product}\\
	&\left|A^r\right|\leq \left|A\right|^r \quad \forall \: r\in \mathbb{N} \label{eq:power}  \\
	&|A|\leq |B| \quad \Rightarrow \quad  \left\|A\right\|_F\leq \left\|B\right\|_F \label{eq:inequality-norm} \\
	&\left\|A\right\|_F=\left\|\left|A\right|\right\|_F \label{eq:equality-norm}
\end{align}
	\end{subequations}
The last equation just states that $A$ and $\left|A\right|$ have the same Frobenius norm. 
\end{lem}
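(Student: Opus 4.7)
The plan is to dispatch each of the five properties separately, since they are all classical entrywise facts about $|\cdot|$ taken componentwise. There is no single unifying idea; the proofs proceed by direct computation on matrix entries, and the only mildly non-trivial point is the order in which the items must be established so that later ones can reuse earlier ones.

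First I would establish \eqref{eq:sum} and \eqref{eq:product} directly from the scalar triangle inequality applied to the $(i,j)$-entry. For \eqref{eq:sum} this is immediate: $|a_{ij}+b_{ij}|\leq |a_{ij}|+|b_{ij}|$. For \eqref{eq:product} one writes $(AB)_{ij}=\sum_k a_{ik}b_{kj}$, applies the triangle inequality to the sum, and then uses $|a_{ik}b_{kj}|=|a_{ik}||b_{kj}|$ to recognize the right-hand side as the $(i,j)$-entry of $|A||B|$. These two are the only "base" inequalities; the remainder are corollaries.

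Next, \eqref{eq:power} follows from \eqref{eq:product} by induction on $r$. The base case $r=1$ is trivial; for the inductive step one chains $|A^{r+1}|\leq |A|\,|A^{r}|\leq |A|\,|A|^{r}=|A|^{r+1}$, where the second inequality uses that $|A|$ has nonnegative entries so left-multiplication by $|A|$ preserves componentwise inequalities between nonnegative matrices. Equation \eqref{eq:equality-norm} is essentially by definition, since $\|A\|_F^2=\sum_{i,j}a_{ij}^2=\sum_{i,j}|a_{ij}|^2=\||A|\|_F^2$. Finally \eqref{eq:inequality-norm} reduces to \eqref{eq:equality-norm}: squaring the hypothesis $|a_{ij}|\leq |b_{ij}|$ gives $|a_{ij}|^2\leq |b_{ij}|^2$, and summing over $(i,j)$ yields $\|A\|_F^2=\||A|\|_F^2\leq \||B|\|_F^2=\|B\|_F^2$.

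The only "obstacle" is organizational: \eqref{eq:power} must be placed after \eqref{eq:product}, and \eqref{eq:inequality-norm} is most cleanly reduced via \eqref{eq:equality-norm}. Otherwise the argument is a routine assembly of scalar identities, which is presumably why the authors simply refer the reader to \cite{Horn85-Book} rather than reproducing it.
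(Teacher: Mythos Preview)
Your proof is correct and complete. The paper itself does not prove this lemma at all: it simply states the properties and refers the reader to \cite[Chap.~8]{Horn85-Book}, so there is no approach to compare against --- your direct entrywise verification is exactly the standard argument one would find there, and your ordering (proving \eqref{eq:product} before \eqref{eq:power}, and \eqref{eq:equality-norm} before \eqref{eq:inequality-norm}) is the natural one.
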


\begin{definition}[Joint spectral radius of set of matrices \cite{Jungers09-Book}]
Let $\Sigma=\left\{A_1,\ldots,A_m\right\}$ be a finite set of square matrices in  $\Re^{n\times n}$.  Let $\left\|\cdot\right\|$ denote an arbitrary norm on $\Re^{n\times n}$. 
The \textit{joint spectral radius} of $\Sigma$ denoted $\rho(\Sigma)$ is defined as the number 
\begin{equation}\label{eq:RhoBar}
\rho(\Sigma)=\lim_{k\rightarrow\infty }\sup\left\{\|U_1\cdots U_k\|^{1/k}: U_j\in \Sigma \:  \forall j=1,\ldots,k \right\}. 	
\end{equation}
\end{definition}
\noindent In the particular case where the set $\Sigma$ contains a single matrix, the joint spectral radius reduces to the usual spectral radius of that single member of $\Sigma$.  As argued in \cite{Jungers09-Book}, the joint spectral radius of a compact set of matrices always exists and is independent of the norm involved in its definition \eqref{eq:RhoBar}.  For $\Sigma=\left\{A_1,\ldots,A_m\right\}$, we use the notation $|\Sigma|$ to denote the set of matrices $|\Sigma|=\left\{|A_1|,\ldots,|A_m|\right\}$ obtained by taking the absolute value of all matrices in $\Sigma$. We also define similarly $\Psi(\Sigma)=\left\{\Psi(A_1),\ldots,\Psi(A_m)\right\}$ with $\Psi(\cdot)$ defined as in \eqref{eq:PSI}.
\begin{lem}\label{lem:rho}
Let $\Sigma=\left\{A_1,\ldots,A_m\right\}$\ be a finite set of square matrices in $\Re^{n\times n}$.  Then 
\begin{subequations}
\begin{equation}\label{eq:RhoSigma}
	\rho(\Sigma)\leq \rho(|\Sigma|)
\end{equation}
\begin{equation}\label{eq:Rho}
\rho(|\Sigma|)=\rho(\Psi(\Sigma))
\end{equation}
\end{subequations} 
\end{lem}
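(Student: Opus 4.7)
The plan is to prove the two claims sequentially, using Lemma \ref{lem:identities} for \eqref{eq:RhoSigma} and the similarity transformation highlighted in Remark \ref{rem:psi(A)} for \eqref{eq:Rho}. I would work with the Frobenius norm throughout, since the joint spectral radius is, as recalled just after \eqref{eq:RhoBar}, independent of the norm chosen in its definition; the Frobenius norm is convenient because it combines cleanly with the componentwise identities of Lemma \ref{lem:identities}.

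For part \eqref{eq:RhoSigma}, pick any word $U_1\cdots U_k$ with $U_j\in\Sigma$. Iterating \eqref{eq:product} gives $|U_1\cdots U_k|\leq |U_1|\cdots|U_k|$ componentwise, and then \eqref{eq:equality-norm} followed by \eqref{eq:inequality-norm} yields
\[
\|U_1\cdots U_k\|_F \;=\; \bigl\||U_1\cdots U_k|\bigr\|_F \;\leq\; \bigl\||U_1|\cdots|U_k|\bigr\|_F.
\]
Taking the supremum over $(U_1,\ldots,U_k)\in\Sigma^k$, raising to the power $1/k$, and letting $k\to\infty$, the right-hand side converges to $\rho(|\Sigma|)$ while the left-hand side converges to $\rho(\Sigma)$, giving \eqref{eq:RhoSigma}.

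For part \eqref{eq:Rho}, I would invoke \eqref{eq:Similar-Tranform}: for every $j$, $T_n^{-1}\Psi(A_j)T_n=\diag(A_j,|A_j|)$. Because the similarity $T_n$ is the \emph{same} for all $j$, any word $\Psi(A_{j_1})\cdots\Psi(A_{j_k})$ is similar through $T_n$ to the block-diagonal matrix $\diag\bigl(A_{j_1}\cdots A_{j_k},\;|A_{j_1}|\cdots|A_{j_k}|\bigr)$. Two standard JSR facts then close the argument: simultaneous similarity preserves the joint spectral radius (the conditioning factor $\kappa(T_n)$ disappears after the $k$-th root), and the JSR of a set of block-diagonal matrices with common block structure is the maximum of the JSRs of the block components. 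Consequently $\rho(\Psi(\Sigma))=\max(\rho(\Sigma),\rho(|\Sigma|))$, and by part \eqref{eq:RhoSigma} this equals $\rho(|\Sigma|)$.

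The only delicate point is really a bookkeeping one, namely writing out the two JSR facts invoked at the end in sufficient detail to keep the argument self-contained; both are immediate from the norm-independent definition \eqref{eq:RhoBar} together with the observation that any constant prefactor vanishes when one takes a $k$-th root and sends $k\to\infty$.
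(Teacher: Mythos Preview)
Your proposal is correct and follows essentially the same route as the paper: the Frobenius-norm argument via Lemma~\ref{lem:identities} for \eqref{eq:RhoSigma}, and the common similarity \eqref{eq:Similar-Tranform} reducing $\Psi(\Sigma)$ to block-diagonal form, together with the block-diagonal JSR identity (cited in the paper as \cite[Prop.~1.5]{Jungers09-Book}) and \eqref{eq:RhoSigma} for \eqref{eq:Rho}. The only cosmetic difference is that you spell out the two JSR facts (similarity invariance and the block-diagonal maximum) rather than invoking the reference directly.
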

\begin{proof}
The inequality \eqref{eq:RhoSigma} follows directly from the identities \eqref{eq:product}, \eqref{eq:inequality-norm} and \eqref{eq:equality-norm} stated above. In effect, since, as remarked above, the joint spectral radius is independent of the norm used in its definition \eqref{eq:RhoBar}, let us employ the Frobenius norm. 
Then $$ \begin{aligned}
	\rho(\Sigma)&=\lim_{k\rightarrow\infty }\sup\left\{\|U_1\cdots U_k\|_F^{1/k}: U_j\in \Sigma \:  \forall j=1,\ldots,k \right\}\\
	&\leq \lim_{k\rightarrow\infty }\sup\left\{\||U_1|\cdots |U_k|\|_F^{1/k}: U_j\in \Sigma \:  \forall j=1,\ldots,k \right\}\\
	& = \rho(|\Sigma|)
\end{aligned}$$ 
Note that the inequality above is a consequence of \eqref{eq:product}, \eqref{eq:inequality-norm} and \eqref{eq:equality-norm} by which $\|U_1\cdots U_k\|_F=\||U_1\cdots U_k|\|_F\leq \||U_1|\cdots |U_k|\|_F$ since $|U_1\cdots U_k|\leq |U_1|\cdots |U_k|$ by \eqref{eq:product}.  \\
In virtue of the relation \eqref{eq:Similar-Tranform}, to obtain \eqref{eq:Rho}, it suffices to use the invariance property of the joint spectral radius under similarity transformation \cite{Jungers09-Book}. In effect, according to \eqref{eq:Similar-Tranform} all matrices contained in $\Psi(\Sigma)$ are commonly reducible to block-diagonal matrices in  the sense that there exists a matrix $T_n$ as defined after Eq. \eqref{eq:Similar-Tranform}  such that for all $A_i\in \Sigma\subset \Re^{n\times n}$, 
$$T_n^{-1}\Psi(A_i)T_n= \BM A_i & 0\\ 0 & |A_i|\EM $$  
By then applying Proposition 1.5 in \cite{Jungers09-Book} and the inequality \eqref{eq:RhoSigma}, we have 
$\rho(\Psi(\Sigma))=\max\left(\rho(\Sigma),\rho(|\Sigma|)\right)\allowbreak=\rho(|\Sigma|)$.  
\end{proof}
In the special case where $\Sigma$ is reduced to a singleton, \eqref{eq:RhoSigma} recovers the more standard relation 
$\rho(A)\leq \rho(|A|)$. 

\begin{lem}\label{lem:stability-implication}
Consider two finite collections of nonnegative matrices $\Sigma=\left\{A_i\in \Re_+^{n\times n}:  i\in \mathbb{S}\right\}$ and $\overline{\Sigma}=\left\{\bar{A}_i\in \Re_+^{n\times n}:   i\in \overline{\mathbb{S}}\right\}$, where $\mathbb{S}$ and $\overline{\mathbb{S}}$ are finite sets with possibly different cardinalities. 
If for any $i\in \mathbb{S}$, there is $j\in \overline{\mathbb{S}}$ such that $A_i\leq \bar{A}_{j}$, then 
$\rho(\Sigma)\leq  \rho(\overline{\Sigma}).$
\end{lem}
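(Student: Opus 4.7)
The plan is to compare products of matrices in $\Sigma$ with dominating products of matrices in $\overline{\Sigma}$ entrywise, and then invoke the monotonicity of the Frobenius norm on nonnegative matrices together with the limit definition of the joint spectral radius.

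First I would fix an arbitrary finite product $U_1\cdots U_k$ with each $U_\ell\in\Sigma$. By the hypothesis, for every $\ell\in\{1,\dots,k\}$ there exists $\bar U_\ell\in\overline{\Sigma}$ such that $0\leq U_\ell\leq \bar U_\ell$ componentwise. The key observation is that, for nonnegative matrices, componentwise order is preserved by multiplication: if $0\leq P\leq P'$ and $0\leq Q\leq Q'$, then $0\leq PQ\leq P'Q'$. Iterating this yields
\begin{equation*}
0\leq U_1\cdots U_k \leq \bar U_1\cdots \bar U_k.
\end{equation*}

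Next, because both sides are nonnegative, $|U_1\cdots U_k|=U_1\cdots U_k$ and $|\bar U_1\cdots \bar U_k|=\bar U_1\cdots \bar U_k$, so \eqref{eq:inequality-norm} of Lemma \ref{lem:identities} yields $\|U_1\cdots U_k\|_F\leq \|\bar U_1\cdots \bar U_k\|_F$. Taking the supremum over all length-$k$ products from $\Sigma$ on the left, and noting that every admissible choice of $\bar U_\ell$ lies in $\overline{\Sigma}$, I obtain
\begin{equation*}
\sup\bigl\{\|U_1\cdots U_k\|_F^{1/k}:U_\ell\in\Sigma\bigr\}\leq \sup\bigl\{\|\bar U_1\cdots \bar U_k\|_F^{1/k}:\bar U_\ell\in\overline{\Sigma}\bigr\}.
\end{equation*}

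Finally, passing to the limit as $k\to\infty$ on both sides and using the definition \eqref{eq:RhoBar} of the joint spectral radius (which is independent of the chosen norm, so the Frobenius norm is legitimate) gives $\rho(\Sigma)\leq \rho(\overline{\Sigma})$, as claimed. The only delicate point is the monotonicity step, but it is elementary once one exploits that all matrices involved are nonnegative, so I do not anticipate a real obstacle; the proof is essentially a componentwise domination argument feeding into the limit definition of $\rho$.
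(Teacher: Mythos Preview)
Your proof is correct and follows essentially the same route as the paper's: you dominate each length-$k$ product from $\Sigma$ by a corresponding product from $\overline{\Sigma}$ via entrywise monotonicity of multiplication of nonnegative matrices, apply the Frobenius-norm monotonicity \eqref{eq:inequality-norm}, and pass to the limit in the definition \eqref{eq:RhoBar}. The paper's proof is just a terser statement of exactly these steps.
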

\begin{proof}
Clearly, it follows from the assumption of the lemma that for any $(i_1,\ldots,i_q)\in \mathbb{S}^q$, there is $(j_1,\ldots,j_q)\in \overline{\mathbb{S}}^q$ such that $0\leq A_{i_1}\cdots A_{i_q}\leq \bar{A}_{j_1}\cdots \bar{A}_{j_q}$. 
The result then follows by applying the statement \eqref{eq:inequality-norm} of Lemma \ref{lem:identities}.  
\end{proof}


\section{Open-loop state interval estimator for LTI systems}\label{sec:Open-Loop}

\subsection{Open-loop simulation: the best interval estimator}
We first discuss a  simulation (that is, an estimation without using the measurement) of the state trajectory of  \eqref{eq:LTI} under an uncertain input sequence $\left\{w(t)\right\}$  and when the initial state $x(0)$ is drawn from a known  interval set. 
For this purpose, we assume that the  matrices $A$ and $B$ have fixed and known values and further, that $A$ is a Schur matrix. Then designing the tightest estimator boils down to  searching for the smallest sequence (in the sense of Definition \ref{def:smallest-interval}) $\left\{[\underline{x}(t),\overline{x}(t)]:t\in \mathbb{Z}_+\right\}$ of interval sets of $\Re^n$ which bound all the possible state trajectories generated by $\left\{w(t)\right\}$ and the uncertain initial state through the first equation of \eqref{eq:LTI}.

Let $(c_x(0),p_x(0))$ and $(c_w(t),p_w(t))$, $t\in  \mathbb{Z}_+$, denote the center-radius representations for the interval-valued initial state $[\underline{x}(0),\overline{x}(0)]$ and uncertain input  $[\underline{w}(t),\overline{w}(t)]$  respectively. Then the next theorem characterizes the tightest interval-valued state estimate in open-loop for system  \eqref{eq:LTI}. 
%
\begin{thm}\label{thm:Tight-Open-Loop}
Consider system \eqref{eq:LTI} under the assumption that $\rho(A)<1$. 
Then the intervals  $[\underline{x}(t),\overline{x}(t)]$, $t\in \mathbb{Z}_+$,   
$$ \underline{x}(t)=c_x(t)-p_x(t) \: \mbox{ and }\:  \overline{x}(t)=c_x(t)+p_x(t),$$
where 
\begin{align}
&c_x(t)  =A^tc_x(0)+\sum_{k=0}^{t-1}A^{t-1-k}B c_w(k) \label{eq:cx(t)}\\
&p_x(t)=\left|A^t\right|p_x(0)+\sum_{i=0}^{t-1}\left|A^{t-1-i}B\right|p_w(i), \label{eq:px(t)}
\end{align} 
 define  the tightest (open-loop) interval-valued estimator for system \eqref{eq:LTI} in the sense of Definition \ref{def:smallest-interval}. 
\end{thm}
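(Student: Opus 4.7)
The plan is to start from the closed-form solution of the state equation \eqref{eq:LTI}, namely
\[
x(t)=A^tx(0)+\sum_{k=0}^{t-1}A^{t-1-k}Bw(k),
\]
and observe that, as $x(0)$ ranges over $[\underline{x}(0),\overline{x}(0)]$ and each $w(k)$ ranges over $[\underline{w}(k),\overline{w}(k)]$, the reachable set $\mathcal{R}(t)$ of all possible values of $x(t)$ is exactly an image of a Cartesian product of intervals under an affine map. This puts us in the setting of Lemma \ref{lem:Az+w}.

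Concretely, I would stack the inputs as $f=\bigl[w(0)^\top\ \cdots\ w(t-1)^\top\bigr]^\top$ with componentwise bounds $\underline{f},\overline{f}$ obtained by stacking the $\underline{w}(k),\overline{w}(k)$, and take $z=x(0)$, $M=A^t$, and $N=\bigl[A^{t-1}B\ A^{t-2}B\ \cdots\ AB\ B\bigr]$. Since the center and radius of $f$ are just the stacked centers and radii of the $w(k)$'s, a direct application of Lemma \ref{lem:Az+w} yields that the tightest interval containing $\mathcal{R}(t)$ has center and radius
\[
c=A^tc_x(0)+\sum_{k=0}^{t-1}A^{t-1-k}Bc_w(k),\qquad p=|A^t|p_x(0)+\sum_{k=0}^{t-1}|A^{t-1-k}B|p_w(k),
\]
which are precisely \eqref{eq:cx(t)}--\eqref{eq:px(t)}. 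Setting $\underline{x}(t)=c-p$ and $\overline{x}(t)=c+p$ recovers the formulas in the theorem, and the bounding property required by Definition \ref{def:Interval-Estimator}(1) is automatic from this construction.

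It remains to verify BIBO stability (Definition \ref{def:Interval-Estimator}(2)). Since the map from $(\underline{x},\overline{x})$ to $(c,p)$ is a linear bijection (see \eqref{eq:Similar-Tranform}), it suffices to bound $c_x(t)$ and $p_x(t)$ in terms of the infinity norms of the input sequences. The bound on $c_x(t)$ is classical: $\rho(A)<1$ gives $\sum_{j\geq 0}\|A^j\|_F<\infty$, so $\|c_x(t)\|$ is controlled by $\|c_x(0)\|$ and $\sup_k\|c_w(k)\|$. For $p_x(t)$ I would use \eqref{eq:equality-norm} from Lemma \ref{lem:identities} to write $\||A^{t-1-k}B|\|_F=\|A^{t-1-k}B\|_F$, and then exactly the same geometric summability argument shows that $\|p_x(t)\|$ stays bounded whenever $p_x(0)$ and the sequence $\{p_w(k)\}$ are bounded.

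The main conceptual step is simply recognising that the reachable set $\mathcal{R}(t)$ fits the $Mz+Nf$ template of Lemma \ref{lem:Az+w} after flattening the time-indexed inputs into a single vector; once this is done the tightness claim is immediate from that lemma, which is the sole nontrivial ingredient. The only subtlety I anticipate is making the BIBO argument cleanly for the radius equation: a naive bound using $|A^t|\leq|A|^t$ would require $\rho(|A|)<1$, which is strictly stronger than the hypothesis $\rho(A)<1$. Going through the Frobenius-norm identity \eqref{eq:equality-norm} avoids this pitfall, since $\||A^t|\|_F=\|A^t\|_F$ and the latter is controlled directly by $\rho(A)<1$.
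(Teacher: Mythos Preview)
Your proposal is correct and follows essentially the same route as the paper: write the variation-of-constants formula, stack the inputs into a single vector, and invoke Lemma~\ref{lem:Az+w} with $M=A^t$ and $N=\bigl[A^{t-1}B\ \cdots\ B\bigr]$ to obtain both the bounding property and tightness at once. Your BIBO discussion is actually more careful than the paper's (which just says ``immediate since $A$ is Schur stable''); in particular, your observation that one should use $\||A^t|\|_F=\|A^t\|_F$ rather than $|A^t|\leq|A|^t$ to stay under the hypothesis $\rho(A)<1$ is a nice clarification.
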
 
\begin{proof}
The proof is an immediate consequence of Lemma \ref{lem:Az+w}. In effect, 
applying repeatedly the first equation in \eqref{eq:LTI} yields
\begin{equation}\label{eq:simulated-state}
	\begin{aligned}
		x(t) & =A^tx(0)+\sum_{k=0}^{t-1}A^{t-1-k}B w(k)\\
		  & = A^tx(0)+\BBM A^{t-1}B & \cdots & AB & B\EEM w_{0:t-1}
	\end{aligned}
\end{equation}
with the notation $w_{0:t}$ defined by $w_{0:t}=\bbm w(0)^\top & \cdots & w(t)^\top\eem^\top$. 
Then by applying Lemma \ref{lem:Az+w} with $z$ and $f$ replaced by $x(0)$ and $w_{0:t-1}$, and $M$ and $N$ replaced by $A^t$ and  $\BBM A^{t-1}B & \cdots & AB & B\EEM$ respectively, we can conclude that the interval sequence defined by \eqref{eq:cx(t)}-\eqref{eq:px(t)}  is a bounding one for the state of system \eqref{eq:LTI} and is also the tightest. As to the BIBO stability, it is also immediate since $A$ is Schur stable. 
\end{proof}
Applying directly Eqs \eqref{eq:cx(t)} and \eqref{eq:px(t)} at each time step to compute $c_x$ and $p_x$  may be very costly (and even unaffordable) when the time horizon for estimation gets  large. It is therefore  desirable to search instead for a one-step ahead difference equations for implementing the derived tightest estimator. In this respect, we can remark that $c_x$ in \eqref{eq:cx(t)} can be easily realized by a state-space representation of the form 
\begin{equation}\label{eq:c_x-state-space}
	c_x(t+1)=Ac_x(t)+Bc_w(t).
\end{equation}
However, realizing $p_x$ in \eqref{eq:px(t)} is quite challenging in general. Though in the specific situations where the entries of $A$ and $B$ have the same sign, $p_x$ satisfies $p_x(t+1)=\left|A\right|p_x(t)+\left|B\right|p_w(t)$. 

\subsection{Exact state-space realization of the estimator }\label{subsec:realization}
In a more general situation one can search for \textit{linear time-invariant realization} of system \eqref{eq:px(t)} (whose input and output are respectively $p_w$ and $p_x$) independently of the class of inputs. The question is that of finding a set of matrices $\left(\mathcal{A},\mathcal{B},\mathcal{C},\phi_0\right)\in \Re^{d\times d}\times \Re^{d\times n_w}\times \Re^{n\times d}\times \Re^{d}$ for some dimension $d$ and such that the solution $\phi$ of the difference equation ${\phi}(t+1)=\mathcal{A}\phi(t)+\mathcal{B}p_w(t)$, $\phi(0)=\phi_0$,  satisfies $p_x(t) = \mathcal{C}\phi(t)$ for all $t\in \mathbb{Z}_+$. Indeed this is true if and  only if
$\mathcal{C}\mathcal{A}^t\tilde{\mathcal{B}}=H(t)$
where $\tilde{\mathcal{B}}=\bbm \mathcal{B} & \phi_0\eem$ and $H$ is the impulse response of the system \eqref{eq:px(t)} defined by $H(t)\triangleq\bbm|A^tB| & |A^t|p_x(0)\eem$. 
This can be expressed in term of a rank condition. For any positive integers $(i,j)$ consider the block Hankel matrix defined by 
\begin{equation}\label{eq:Hankel}
	\mathcal{H}_{i,j}=\left[\begin{array}{cccccc} H(0) & H(1) & &\cdots& &H(j-1)\\ 
											H(1) & H(2) & &\cdots& & H(j) \\
											\vdots & \vdots & &\vdots& & \vdots\\
											H(i-1) & H(i) & &\cdots& & H(i+j-2)\end{array}\right]
	\end{equation}
Then a necessary and sufficient condition for the existence of an LTI realization of system \eqref{eq:px(t)} is obtained as follows  \cite[p.125]{Casti12-Book}. 
\begin{thm}\label{thm:realization}
System \eqref{eq:px(t)} is LTI realizable in finite dimension if and only if there exist integers $r$, $l$ and $m$ such that
\begin{equation}\label{eq:realizability}
	\rank\left(\mathcal{H}_{r,l}\right)=\rank\left(\mathcal{H}_{r+1,l+j}\right)=m<\infty \quad \forall j\geq 1. 
\end{equation}
\end{thm}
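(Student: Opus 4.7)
The plan is to recognize this as the classical Ho--Kalman partial realization theorem for the sequence $H(0),H(1),\ldots$ and adapt its standard proof. Throughout, I would exploit the factorization property: if an LTI realization $(\mathcal{A},\mathcal{B},\mathcal{C},\phi_0)$ of dimension $d$ exists, then $H(t)=\mathcal{C}\mathcal{A}^t\tilde{\mathcal{B}}$ gives
\begin{equation*}
\mathcal{H}_{i,j}=\mathcal{O}_i\mathcal{R}_j, \quad
\mathcal{O}_i=\BM \mathcal{C}\\ \mathcal{C}\mathcal{A}\\ \vdots \\ \mathcal{C}\mathcal{A}^{i-1}\EM,\quad
\mathcal{R}_j=\BBM \tilde{\mathcal{B}} & \mathcal{A}\tilde{\mathcal{B}} & \cdots & \mathcal{A}^{j-1}\tilde{\mathcal{B}}\EEM.
\end{equation*}
This factorization is the engine for both directions.

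For necessity, I would use Cayley--Hamilton: the ranks of $\mathcal{O}_i$ and $\mathcal{R}_j$ are non-decreasing in $i,j$ and saturate at their values for $i,j=d$. Consequently $\rank(\mathcal{H}_{i,j})\le d<\infty$ and stabilizes for all $i,j\ge d$, which immediately produces integers $r,l$ (and a finite $m$) satisfying condition \eqref{eq:realizability}.

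For sufficiency, I would implement the Ho--Kalman construction. Given $r,l,m$ as in \eqref{eq:realizability}, take any full-rank factorization $\mathcal{H}_{r,l}=UV$ with $U\in\Re^{rn\times m}$ and $V\in\Re^{m\times l(n_w+1)}$. Define $\mathcal{C}$ as the first block row of $U$ and $\tilde{\mathcal{B}}$ (hence $\mathcal{B}$ and $\phi_0$) as the first block column of $V$. The operator $\mathcal{A}$ is then obtained from the shift identity: letting $\mathcal{H}^{\sigma}_{r,l}$ denote the Hankel matrix built from $H(1),H(2),\ldots$, the rank-stability assumption ensures $\mathcal{H}^{\sigma}_{r,l}$ shares the column space of $U$ and the row space of $V$, so $\mathcal{H}^{\sigma}_{r,l}=U\mathcal{A}V$ has a unique solution $\mathcal{A}=U^{+}\mathcal{H}^{\sigma}_{r,l}V^{+}$. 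A direct computation then yields $\mathcal{C}\mathcal{A}^t\tilde{\mathcal{B}}=H(t)$ for $t=0,\ldots,r+l-1$, and the invariance condition \eqref{eq:realizability} (the rank stays equal to $m$ as one enlarges to $\mathcal{H}_{r+1,l+j}$) is precisely what extends this match to all $t\ge 0$.

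The main obstacle, as in the classical argument, is establishing that the shift equation $\mathcal{H}^{\sigma}_{r,l}=U\mathcal{A}V$ is solvable and that the resulting $\mathcal{A}$ reproduces $H(t)$ for \emph{all} $t\ge 0$, not just within the finite window covered by $\mathcal{H}_{r,l}$. This is exactly where the hypothesis $\rank(\mathcal{H}_{r,l})=\rank(\mathcal{H}_{r+1,l+j})$ for every $j\ge 1$ is used: it guarantees that each new block row/column of $H$ lies in the column/row space already spanned, so the recursion $H(t+1)=\mathcal{C}\mathcal{A}^{t+1}\tilde{\mathcal{B}}$ propagates correctly. Once this is secured, invoking Casti's formulation \cite[p.125]{Casti12-Book} closes the argument.
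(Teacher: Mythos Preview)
The paper does not actually prove this theorem: it is stated without proof and attributed directly to \cite[p.125]{Casti12-Book} as a classical result from realization theory. Your proposal therefore goes well beyond what the paper offers, supplying the standard Ho--Kalman argument that underlies the cited reference. The sketch is correct in its essentials: the factorization $\mathcal{H}_{i,j}=\mathcal{O}_i\mathcal{R}_j$ combined with Cayley--Hamilton handles necessity, and the full-rank factorization together with the shifted-Hankel construction of $\mathcal{A}$ handles sufficiency, with the rank-stability hypothesis \eqref{eq:realizability} ensuring the match $\mathcal{C}\mathcal{A}^t\tilde{\mathcal{B}}=H(t)$ extends to all $t\ge 0$. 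Since the paper treats the result as a quotable fact, your approach is not a different route so much as an explicit unpacking of the very reference the paper invokes.
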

Standard realization algorithms can be employed to compute, whenever the condition of Theorem \ref{thm:realization} is satisfied,  a minimal LTI realization $(\mathcal{A},\tilde{\mathcal{B}},\mathcal{C})$ of \eqref{eq:px(t)}. 
For more on this matter, the interested reader is referred to, e.g., \cite{Casti12-Book,DeSchutter00,VanDenHof97-LAA}.


\subsection{Over-approximations}\label{subsec:Approximation}
When the realizability condition \eqref{eq:realizability} fails to hold, then obtaining the tightest interval estimator requires computing the smallest radius  from its convolutional expression given in \eqref{eq:px(t)}.  However this might become quickly impractical as time goes to infinity.   So, in practice, by  default of being able to realize $p_x$ with a finite-dimensional state-space representation, a relaxation may be necessary. 
Here we discuss two methods for overestimating $p_x(t)$. 

\subsubsection{A truncated approximation}\label{subsub:Truncated}
The first approximation method relies on an expansion of $x(t)$ over a sliding time horizon of fixed size. More specifically, by noting that $x(t) = A^{q}x(t-q)+\sum_{k=t-q}^{t-1} A^{t-1-k}Bw(k)$
for some fixed $q$, we can invoke Lemma \ref{lem:Az+w} to obtain an overestimate of $p_x$ as given in the following proposition. 
\begin{prop}\label{prop:pxq}
Let $q$ be a fixed positive integer. Consider the signal $\hat{p}_{x,q}:\mathbb{Z}_+\rightarrow\Re_+^{n}$ defined by 
\begin{equation}\label{eq:phat}
	\hat{p}_{x,q}(t)=
\left\{	\begin{array}{ll}
	\left|A^t\right|{p}_x(0)+ \sum_{k=0}^{t-1} |A^{t-1-k}B|p_w(k) &  \mbox{ if } t=0,\ldots,q\\
	\left|A^q\right|\hat{p}_{x,q}(t-q)+ \sum_{k=t-q}^{t-1} |A^{t-1-k}B|p_w(k), & \mbox{ if } t> q
	 \end{array}\right.
\end{equation}
with $A$ and $B$ being the matrices of the system \eqref{eq:LTI} and $p_x(0)$ and $p_w$ defined as in the statement of Theorem \ref{thm:Tight-Open-Loop}.\\
Then the following statements are true: 
\begin{enumerate}
	\item ${p}_{x}(t)$ is upper-bounded componentwise by $\hat{p}_{x,q}(t)$ 
	\begin{equation}\label{eq:Ineq-px}
	p_x(t)\leq \hat{p}_{x,q}(t) \quad \forall t\in \mathbb{Z}_+
\end{equation}
 so that
$$\interval{c_x(t)-p_{x}(t)}{c_x(t)+p_x(t)}\subset \interval{c_x(t)-\hat{p}_{x,q}(t)}{c_x(t)+\hat{p}_{x,q}(t)} \quad \forall t\in \mathbb{Z}_+$$
where $c_x$ and $p_x$ are defined in \eqref{eq:cx(t)}-\eqref{eq:px(t)}. 
\item If in addition, $\rho(A)<1$, then there exists an integer $q^\star\in \mathbb{Z}_+$ such that the sequence $\interval{c_x(t)-\hat{p}_{x,q}(t)}{c_x(t)+\hat{p}_{x,q}(t)}$, $t\in \mathbb{Z}_+$,  defines an interval-valued estimator for all  $q\geq q^\star$. 
\end{enumerate}
\end{prop}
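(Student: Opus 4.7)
\textbf{Plan for Proposition \ref{prop:pxq}.}

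For statement (1), I will proceed by induction on $t$. The base case covers $t\in\{0,\dots,q\}$: by construction $\hat{p}_{x,q}(t)$ agrees with $p_x(t)$ on this range, so the inequality holds with equality. For the inductive step, assuming $p_x(t-q)\leq \hat{p}_{x,q}(t-q)$, I substitute into the recursion for $\hat{p}_{x,q}(t)$ and expand $p_x(t-q)$ using \eqref{eq:px(t)}. The key algebraic point is that, by the submultiplicativity identity \eqref{eq:product}, $|A^q|\,|A^{t-q}|\geq |A^t|$ and $|A^q|\,|A^{t-q-1-k}B|\geq |A^{t-1-k}B|$ for $0\leq k\leq t-q-1$. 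Combining the inner sum produced by the recursion with the outer sum $\sum_{k=t-q}^{t-1}|A^{t-1-k}B|p_w(k)$ then recovers (and upper-bounds) the full expression of $p_x(t)$, establishing \eqref{eq:Ineq-px}. The set inclusion on intervals follows because $c_x(t)$ is common to both sides and $\hat{p}_{x,q}(t)\geq p_x(t)\geq 0$.

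For statement (2), the set-membership property of the proposed interval (that every state trajectory lies inside it) is inherited for free from Theorem \ref{thm:Tight-Open-Loop} combined with statement (1). What remains is BIBO stability. Boundedness of the center $c_x$ under bounded $c_w$ is clear from the realization \eqref{eq:c_x-state-space} since $\rho(A)<1$. The real work is to show boundedness of $\hat{p}_{x,q}$. I will decompose time by its residue modulo $q$: fix $r\in\{0,\dots,q-1\}$ and define $\phi_j^{(r)}=\hat{p}_{x,q}(jq+r)$. For $j$ large enough the recursion gives the linear time-invariant iteration
\begin{equation*}
\phi_{j+1}^{(r)} \;=\; |A^q|\,\phi_j^{(r)} \;+\; u_j^{(r)}, \qquad u_j^{(r)}=\sum_{k=(j+1)q+r-q}^{(j+1)q+r-1} |A^{(j+1)q+r-1-k}B|\,p_w(k),
\end{equation*}
with $u_j^{(r)}$ uniformly bounded since $p_w$ is. BIBO stability of this subsampled iteration is equivalent to $\rho(|A^q|)<1$.

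The crux is therefore to establish the existence of $q^\star$ such that $\rho(|A^q|)<1$ for all $q\geq q^\star$. I will argue via Gelfand's formula: for any submultiplicative norm, $\lim_{q\to\infty}\|A^q\|^{1/q}=\rho(A)<1$, hence $\|A^q\|_F\to 0$. Combining the norm identity \eqref{eq:equality-norm}, which gives $\bigl\||A^q|\bigr\|_F=\|A^q\|_F$, with the standard bound $\rho(M)\leq\|M\|_F$ for any square $M$, I obtain $\rho(|A^q|)\leq\|A^q\|_F\to 0$. This yields the threshold $q^\star$. Once $\rho(|A^q|)<1$, each residue class $r$ produces a bounded subsequence $\{\phi_j^{(r)}\}$, so $\hat{p}_{x,q}$ is bounded on $\mathbb{Z}_+$, which together with the boundedness of $c_x$ concludes BIBO stability.

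The main obstacle is the stability step: there is no direct link between $\rho(A)<1$ and $\rho(|A|)<1$ in general (indeed $\rho(|A|)$ can exceed one), so the argument must go through $|A^q|$ rather than $|A|^q$, exploiting that the Frobenius norm of $A^q$ itself decays geometrically by Gelfand.
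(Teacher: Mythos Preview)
Your proposal is correct and follows essentially the same approach as the paper. For statement~(1), you argue by a step-$q$ induction while the paper unrolls the recursion of $\hat p_{x,q}$ all the way down to the base residue $r(t)$ before applying the inequalities; both arguments rest on the same submultiplicativity fact $|A^q|\,|M|\geq |A^qM|$ from \eqref{eq:product}, so the difference is purely presentational (your version is arguably tidier). For statement~(2), your reasoning---reduce BIBO stability of $\hat p_{x,q}$ to $\rho(|A^q|)<1$ via the residue-class subsampled iteration, then obtain $\rho(|A^q|)\leq \bigl\||A^q|\bigr\|_F=\|A^q\|_F\to 0$ from $\rho(A)<1$---is exactly the paper's argument.
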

\begin{proof}
To establish \eqref{eq:Ineq-px}, we start by observing that $\hat{p}_{x,q}(t)=p_x(t)$ for all $t=0,\ldots,q$. Hence, \eqref{eq:Ineq-px} is satisfied for $t=0,\ldots,q$. Now if $t>q$, it is possible to write it in the form $t=q\alpha(t)+r(t)$ for some positive integers $(\alpha(t),r(t))$ such that $0\leq r(t)<q$. 
On the other hand, we have $\hat{p}_{x,q}(t)=\left|A^q\right|\hat{p}_{x,q}(t-q)+ \sum_{k=t-q}^{t-1} |A^{t-1-k}B|p_w(k)$ in this case. Iterating this equation for $\hat{p}_{x,q}(t-q)$, $\hat{p}_{x,q}(t-2q)$, etc, ultimately yields
$$\hat{p}_{x,q}(t)=|A^q|^{\alpha(t)}\hat{p}_{x,q}(r(t))+\sum_{j=1}^{\alpha(t)}\: \sum_{k=t-jq}^{t-1-(j-1)q}|A^{t-1-k}B|p_w(k) $$ 
Recall that $\hat{p}_{x,q}(r(t))=p_x(r(t))=\left|A^{r(t)}\right|{p}_x(0)+ \sum_{k=0}^{r(t)-1} |A^{r(t)-1-k}B|p_w(k)$. Replacing in the expression of $\hat{p}_{x,q}(t)$ and invoking  the identities \eqref{eq:product}-\eqref{eq:power} stated in Lemma \ref{lem:identities} gives
$$\begin{aligned}
	\hat{p}_{x,q}(t)&=|A^q|^{\alpha(t)}|A^{r(t)}|{p}_x(0)+\sum_{k=0}^{r(t)-1} |A^q|^{\alpha(t)} |A^{r(t)-1-k}B|p_w(k)+
		\sum_{k=r(t)}^{t-1}|A^{t-1-k}B|p_w(k)\\ 
	&\geq |A^{q\alpha(t)+r(t)}|{p}_x(0)+\sum_{k=0}^{r(t)-1} |A^{q\alpha(t)+r(t)-1-k}B|p_w(k)+\sum_{k=r(t)}^{t-1}|A^{t-1-k}B|p_w(k)\\
	& = |A^{t}|{p}_x(0)+\sum_{k=0}^{t-1} |A^{t-1-k}B|p_w(k)\\
	&= p_x(t).
\end{aligned} 
$$ 
We have hence established that $p_x(t)\leq \hat{p}_{x,q}(t)$ for all $t\in \mathbb{Z}_+$. \\
To prove the second statement, it suffices to show that there exists an integer $q^\star$ such that  $\rho(|A^q|)<1$ $\forall q\geq q^\star$. To see this, recall that $\rho(A)<1$ implies that $\lim_{q\rightarrow +\infty}\left\|A^q\right\|_F=0$. Hence,  there exists an integer $q^\star$ such that $\left\|A^q\right\|_F<1$ for all $q\geq q^\star$. On the other hand we have\footnote{It is known from \cite[Prop. 1.4]{Jungers09-Book} that 
the spectral radius $\rho$ of a matrix $A\in \Re^{n\times n}$ obeys $\rho(A)=\inf_{\left\|\cdot\right\|}\left\|A\right\|$ with the infimum being taken over all norms on $\Re^{n\times n}$.} $\rho(|A^q|)\leq \left\||A^q|\right\|_F=\left\|A^q\right\|_F$.  
\end{proof}

\noindent What  Proposition \ref{prop:pxq} says is that if we choose appropriately the integer $q$, then the sequence of intervals $\interval{c_x(t)-\hat{p}_{x,q}(t)}{c_x(t)+\hat{p}_{x,q}(t)}$, $t\in \mathbb{Z}_+$, with $\hat{p}_{x,q}(t)$  expressed as in \eqref{eq:phat},  defines an interval estimator for system \eqref{eq:LTI} in the sense of Definition \ref{def:Interval-Estimator}. This interval estimator always contains the tightest one characterized in Theorem \ref{thm:Tight-Open-Loop}.  Naturally, the larger $q$, the higher the computational cost associated with the implementation but the smaller the radius $\hat{p}_{x,q}(t)$ and the error $\hat{p}_{x,q}(t)-p_x(t)$. 

\begin{rem}\label{rem:Approx-order-one}
If $\rho(\left|A\right|)<1$, then  taking $q=1$ in \eqref{eq:phat} yields a much simpler interval estimator $(c_x(t),\hat{p}_{x,1}(t))$ with $c_x$ defined  in \eqref{eq:c_x-state-space} and $\hat{p}_{x,1}$ defined as in \eqref{eq:phat} admitting directly a state-space representation,
\begin{equation}\label{eq:phat1}
	\hat{p}_{x,1}(t+1)=\left|A\right|\hat{p}_{x,1}(t)+\left|B\right|p_w(t).
\end{equation}
In contrast, the resulting radius $\hat{p}_{x,1}(t)$ is larger, that is, the associated bounds of the interval-valued estimate are looser. 
One can further observe that the interval estimator \eqref{eq:phat1} can indeed be  written in the more classical observer form (see Remark \ref{rem:psi(A)}):
$$	\BM\underline{x}(t+1) \\ \overline{x}(t+1)\EM=
	\Psi(A) \BM\underline{x}(t) \\  \overline{x}(t)\EM+\Psi(B)\BM \underline{w}(t)\\ \overline{w}(t)\EM $$
	where $\Psi$ is defined as in \eqref{eq:PSI}. 
\end{rem}

\subsubsection{Over-approximating the input}
The second approximation method makes use of the following proposition (whose  proof follows by simple calculations). 
\begin{prop}\label{eq:px-for-constant-pw}
Assume that $p_w(t)=p_w(0)$ for all $t\in \mathbb{Z}_+$, i.e., $p_w$ is constant. Then the sequence $\left\{p_x(t)\right\}$ in \eqref{eq:px(t)} can be realized as follows:
\begin{equation}\label{eq:realization-px}
	\left\{\begin{aligned}
		&M(t+1)=AM(t), \quad M(0)=I_{n}\\
		&r(t+1) = r(t)+\left|M(t)B\right|p_w(0), \quad r(0)=0\\
		&p_x(t)=\left|M(t)\right|p_x(0)+r(t)
	\end{aligned}\right.
\end{equation}
with state $(M(t),r(t))\in \Re^{n\times n}\times \Re^{n}$ and $I_{n}$ being the identity matrix of order $n$. 
\end{prop}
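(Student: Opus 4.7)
The plan is to verify the proposition by solving the three recursions in \eqref{eq:realization-px} explicitly and then matching the resulting expression for $p_x(t)$ against the closed form \eqref{eq:px(t)} of Theorem \ref{thm:Tight-Open-Loop} specialized to a constant radius $p_w(t)\equiv p_w(0)$.

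First I would solve the $M$-recursion. Since $M(t+1)=AM(t)$ with $M(0)=I_n$, a trivial induction yields $M(t)=A^t$ for all $t\in\mathbb{Z}_+$, and hence $|M(t)|=|A^t|$. Next I would unfold the $r$-recursion: with $r(0)=0$ and $r(t+1)=r(t)+|M(t)B|p_w(0)$, another induction gives
\begin{equation*}
r(t)=\sum_{k=0}^{t-1}|M(k)B|\,p_w(0)=\sum_{k=0}^{t-1}|A^{k}B|\,p_w(0).
\end{equation*}
Substituting these two identities into the output equation $p_x(t)=|M(t)|p_x(0)+r(t)$ yields
\begin{equation*}
p_x(t)=|A^{t}|\,p_x(0)+\sum_{k=0}^{t-1}|A^{k}B|\,p_w(0).
\end{equation*}

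Then I would compare with \eqref{eq:px(t)} under the standing assumption $p_w(i)=p_w(0)$: the formula reads $p_x(t)=|A^{t}|p_x(0)+\sum_{i=0}^{t-1}|A^{t-1-i}B|\,p_w(0)$. A change of summation index $k=t-1-i$ (so that $k$ sweeps $\{0,\ldots,t-1\}$ as $i$ does) matches this expression with the one obtained above, establishing equality for every $t\in\mathbb{Z}_+$.

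I do not expect any real obstacle: once the constancy of $p_w$ is exploited, the claim reduces to a straightforward re-indexing of the convolution sum, and $M$ simply tracks the matrix power $A^t$ while $r$ accumulates the input contribution. The only thing worth making explicit is that $|M(t)B|=|A^tB|$, which is immediate from $M(t)=A^t$ and therefore does not require any of the submultiplicative inequalities of Lemma \ref{lem:identities}.
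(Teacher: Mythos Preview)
Your proof is correct and is exactly the ``simple calculations'' the paper alludes to without spelling out: solving $M(t)=A^t$, telescoping $r(t)=\sum_{k=0}^{t-1}|A^kB|p_w(0)$, and re-indexing the sum in \eqref{eq:px(t)} under $p_w\equiv p_w(0)$.
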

Indeed by Assumption \ref{assum:Bounding}, $\left\{p_w(t)\right\}$ is bounded. Therefore, let $r^o$ be the vector in $\Re^{n_w}$ whose $i$-th   entry denoted $r_i^o$ is defined by $r_i^o=\max_{t\in \mathbb{Z}_+}p_{w,i}(t)$ where $p_{w,i}(t)$ refers to the $i$-th entry of $p_w(t)$. Then $\left\{w(t)\right\}$ satisfies $c_w(t)-r^o\leq w(t)\leq c_w(t)+r^o$ and hence $(c_w(t),r^o)$ is a valid but looser interval representation for the input signal $w$ which fulfills the condition of Proposition \ref{eq:px-for-constant-pw}.  As a consequence, replacing $p_w(0)$ in \eqref{eq:realization-px} with $r^o$ gives a computable realization of an interval estimator for the state of system \eqref{eq:LTI}. 

\section{Closed-loop state estimator}\label{sec:Closed-Loop}

\subsection{A formal characterization of the best linear estimator}
In case the system \eqref{eq:LTI} is not stable but detectable\footnote{i.e., all its unobservable modes are stable.}, it is possible to find a matrix gain $L$ such that $A-LC$ is Schur stable. We can then construct an interval-valued state estimator from the classical observer form. 
For example, departing from the structure of the classical Luenberger observer, it is easy to see that the true state of system  of \eqref{eq:LTI} satisfies 
\begin{equation}\label{eq:observer}
	x(t+1) =F(L)x(t)+G(L)s(t)
\end{equation}
where $F(L)=A-LC$, $G(L)=\bbm B & L & -L\eem$ and $s(t) = \bbm  w(t)^\top  & y(t)^\top & v(t)^\top\eem^\top$. Since this equation has the same form as \eqref{eq:LTI} and $L$ is assumed be such that $A-LC$ is Schur stable, all the preceding discussion in Section \ref{sec:Open-Loop} is applicable to the closed-loop case.
Note that beyond the possibility of retrieving stability from linear output injection as in \eqref{eq:observer}, it is intuitive that exploiting the measurements can help tighten further the interval-valued estimate of the state.   

\noindent For a given $L\in \mathcal{L}\triangleq\big\{L\in \Re^{n\times n_y}:\rho(A-LC)<1\big\}$, define $\mathcal{E}_L$ to be  the set of all pairs of vector-valued signals $(\underline{x},\overline{x})$ on $\mathbb{Z}_+$ which (lower and upper) bound the state $x$  under the constraint \eqref{eq:observer}, 
$$\mathcal{E}_L=\left\{(\underline{x},\overline{x}):\begin{array}{l}
	  \underline{x}(t)\leq x(t)\leq \overline{x}(t) \: \forall t\in \mathbb{Z}_+ \\
	 x(t+1)=F(L)x(t)+G(L)s(t) \:  \forall t\in \mathbb{Z}_+ \\ 
	  \underline{s}(t)\leq s(t)\leq \overline{s}(t) \: \forall t\in \mathbb{Z}_+\\
	  \underline{x}(0)\leq x(0)\leq \overline{x}(0) 
\end{array} 
\right\}$$
We give below a formal characterization of the best estimator over all members of \rev{$\mathcal{E}\triangleq \cup_{L\in \mathcal{L}}\mathcal{E}_L$.} 
\begin{thm}\label{thm:Tightest-CL}
Consider the LTI system \eqref{eq:LTI} under Assumption \ref{assum:Bounding}. 
Then the sequence of intervals  $[\underline{x}^\star(t),\overline{x}^\star(t)]$ defined by 
\begin{align}
	&\underline{x}_i^\star(t)=\sup_{L\in \mathcal{L}}\left[\hat{c}_{i}(t,L)-\hat{p}_{i}(t,L)\right] \label{eq:xbar-lower-star} \\
	&\overline{x}_i^\star(t)=\inf_{L\in \mathcal{L}}\left[\hat{c}_{i}(t,L)+\hat{p}_{i}(t,L)\right], \label{eq:xbar-upper-star}
\end{align}
for all $t\in \mathbb{Z}_+$ and all $i=1,\ldots,n$, with 
\begin{align}
&\hat{c}(t,L)  =F(L)^tc_x(0)+\sum_{k=0}^{t-1}F(L)^{t-1-k}G(L) c_s(k) \label{eq:cx(t)-CL}\\
&\hat{p}(t,L)\!=\!\left|F(L)^t\right|p_x(0)\!+\!\sum_{k=0}^{t-1}\left|F(L)^{t-1-k}G(L)\right|p_s(k), \label{eq:px(t)-CL}
\end{align} 
 \noindent define  the tightest interval-valued state estimator for system \eqref{eq:LTI}  within the class $\mathcal{E}$  in the sense of Definition \ref{def:smallest-interval}. 

\noindent Here, notation of the type $\hat{c}_{i}(t,L)$ refers to the $i$-th entry of the vector $\hat{c}(t,L)\in \Re^n$. 
\end{thm}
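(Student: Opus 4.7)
The plan is to reduce the closed-loop case to the open-loop result of Theorem~\ref{thm:Tight-Open-Loop} applied separately to the reformulation \eqref{eq:observer} for each admissible gain $L$, and then to assemble these $L$-specific enclosures into a single one by a componentwise intersection over $L$.

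First, I fix an arbitrary $L \in \mathcal{L}$. The true trajectory $x$ of \eqref{eq:LTI} also solves the reformulated equation \eqref{eq:observer}, in which $F(L) = A - LC$ is Schur by the definition of $\mathcal{L}$ and the augmented input $s(t) = [w(t)^\top,\, y(t)^\top,\, v(t)^\top]^\top$ has componentwise bounds $[\underline{s}(t), \overline{s}(t)]$ determined by Assumption~\ref{assum:Bounding} together with the observed $y$. Since \eqref{eq:observer} has exactly the form of the open-loop system treated in Section~\ref{sec:Open-Loop}, Theorem~\ref{thm:Tight-Open-Loop} applies verbatim and yields that the interval with center $\hat c(t, L)$ and radius $\hat p(t, L)$ given by \eqref{eq:cx(t)-CL}--\eqref{eq:px(t)-CL} is the tightest element of $\mathcal{E}_L$: it encloses every admissible value of $x(t)$ and is componentwise contained in every other member of $\mathcal{E}_L$.

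Next, because the reformulation is a pure algebraic rewriting, the set of admissible state values at time $t$ is independent of $L$, so the true $x(t)$ lies in the $L$-specific tightest interval for every $L \in \mathcal{L}$. Hence $x(t) \in \bigcap_{L \in \mathcal{L}} [\hat c(t,L) - \hat p(t,L),\, \hat c(t,L) + \hat p(t,L)]$. Coordinatewise, the intersection of a family of real intervals $[\alpha_L, \beta_L]$ equals $[\sup_L \alpha_L,\, \inf_L \beta_L]$ whenever it is nonempty; here both sup and inf are well defined because any single $L'$ produces finite values, and the sup does not exceed the inf because every term of the intersection contains $x(t)$. The intersection is therefore exactly the interval $[\underline{x}^\star(t), \overline{x}^\star(t)]$ defined by \eqref{eq:xbar-lower-star}--\eqref{eq:xbar-upper-star}, which establishes the enclosure condition of Definition~\ref{def:Interval-Estimator}.

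For tightness within $\mathcal{E}$, let $\mathcal{J}$ be any member of $\mathcal{E}$, i.e., $\mathcal{J} \in \mathcal{E}_L$ for some $L \in \mathcal{L}$. The $\mathcal{E}_L$-tightness recalled in the first step gives $[\hat c(t, L) - \hat p(t, L),\, \hat c(t, L) + \hat p(t, L)] \subset \mathcal{J}$, and by construction $[\underline{x}^\star(t), \overline{x}^\star(t)]$ is contained componentwise in every $L$-specific interval; chaining the two inclusions yields $[\underline{x}^\star(t), \overline{x}^\star(t)] \subset \mathcal{J}$, which is the minimality condition of Definition~\ref{def:smallest-interval} relativized to $\mathcal{E}$. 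BIBO stability follows by dominating $\underline{x}^\star$ and $\overline{x}^\star$ componentwise by the endpoints of the $L_0$-specific estimator for any fixed $L_0 \in \mathcal{L}$, which are BIBO by the corresponding conclusion of Theorem~\ref{thm:Tight-Open-Loop}. The main conceptual obstacle is essentially the interpretation of "tightest within $\mathcal{E}$": the pair $(\underline{x}^\star, \overline{x}^\star)$ need not belong to any single $\mathcal{E}_L$, so the result says that the pointwise infimum of the family of $L$-specific tightest enclosures, rather than any single member of the class, is the sharpest bound the class can deliver.
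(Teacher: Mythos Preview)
Your proof is correct and follows essentially the same approach as the paper: apply Theorem~\ref{thm:Tight-Open-Loop} to the observer reformulation \eqref{eq:observer} for each fixed $L\in\mathcal{L}$ to obtain the tightest member of $\mathcal{E}_L$, then take the componentwise intersection (sup/inf) over $L$ to get the tightest estimator within $\mathcal{E}$. The paper's own argument is much terser (it simply says the conclusion follows ``naturally''), whereas you spell out explicitly the enclosure step, the minimality step, the BIBO argument, and the caveat that $(\underline{x}^\star,\overline{x}^\star)$ need not belong to any single $\mathcal{E}_L$---this last point is exactly the remark the paper makes right after its proof.
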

\begin{proof}
For a fixed $L\in \mathcal{L}$ we know from Theorem   \ref{thm:Tight-Open-Loop} that $(\hat{c}(\cdot,L),\hat{p}(\cdot,L))$ in \eqref{eq:cx(t)-CL}-\eqref{eq:px(t)-CL} is the center-radius representation of the tightest interval estimator for system \eqref{eq:LTI} under the constraint \eqref{eq:observer}. If we now let $L$ vary over $\mathcal{L}$, then naturally the pair $(\underline{x}^\star,\overline{x}^\star)$ as defined in \eqref{eq:xbar-lower-star}-\eqref{eq:xbar-upper-star} gives the tightest interval-valued state estimator for system \eqref{eq:LTI} of the class $\mathcal{E}$. 
\end{proof}

\noindent We note that $(\underline{x}^\star,\overline{x}^\star)$ does not necessarily lie in $\mathcal{E}$.  
It is also worth pointing out that from a computational perspective, the estimator $(\underline{x}^\star,\overline{x}^\star)$ defined by  \eqref{eq:xbar-lower-star}-\eqref{eq:xbar-upper-star} is very hard to handle. We will therefore study further the type of approximation given in Proposition \ref{prop:pxq} for a fixed value of $L\in \mathcal{L}$.

\subsection{A systematic design method}
In virtue of Lemma \ref{lem:Az+w}, an interval estimator can, in principle, be obtained from any standard linear observer. 
A general recipe for constructing such an interval estimator consists in (i) adjusting the equation of the classical observer so that it is satisfied by the true state as in  \eqref{eq:observer}; and  then  (ii)  applying Lemma \ref{lem:Az+w} either to \eqref{eq:observer} directly or  to a truncated expansion of it similar to the one discussed in  the beginning of Section \ref{subsub:Truncated}. Following this logic, let us apply the truncated interval estimator stated in Proposition \ref{prop:pxq} to the system \eqref{eq:observer} (which, recall, is a standard observer-based expression of the state of system \eqref{eq:LTI}). Denote with $(c_{x}^{\cl}, {p}_{x,q}^{\cl})$ the center-radius representation of the resulting  interval-valued estimator. Then for a given $L\in \mathcal{L}$, we have $c_{x}^{\cl}(t)=\hat{c}(t,L)$ $\forall t\in \mathbb{Z}_+$, with $\hat{c}(t,L)$ defined in \eqref{eq:cx(t)-CL}. As already mentioned, such a vector-valued signal $c_{x}^{\cl}$ is realizable in state-space form by  the following difference equation
\begin{equation}\label{eq:cx-CL}
	c_x^{\cl}(t+1)=F(L)c_x^{\cl}(t)+G(L)c_s(t), \quad   c_x^{\cl}(0)=c_x(0). 
\end{equation}


\noindent As to the radius ${p}_{x,q}^{\cl}:\mathbb{Z}_+\rightarrow \Re_{+}^n$, it is defined by 
\begin{equation}\label{eq:phat-CL}
	{p}_{x,q}^{\cl}(t)=
\left\{	\begin{array}{ll}
	\left|F(L)^t\right|{p}_x(0)+ \sum_{k=0}^{t-1} |F(L)^{t-1-k}G(L)|p_s(k) &  \mbox{ if } t=0,\ldots,q\\
	\left|F(L)^q\right|{p}_{x,q}^{\cl}(t-q)+ \sum_{k=t-q}^{t-1} |F(L)^{t-1-k}G(L)|p_s(k), & \mbox{ if } t> q
	 \end{array}\right.
\end{equation}
Here, $(c_s(t), p_s(t))\in \Re^{n_s}\times\Re_+^{n_s}$, $n_s=n_w+2n_y$, is a center-radius representation of $s(t)$ for $t\in \mathbb{Z}_+$. In explicit expressions, we have $c_s(t)=\bbm  c_w(t)^\top  & y(t)^\top & c_v(t)^\top\eem^\top$ and 
 $p_s(t)=[\begin{matrix}  p_w(t)^\top  & \mathbf{0}_{n_y}^\top & p_v(t)^\top\end{matrix}]^\top$ 
with $\mathbf{0}_{n_y}$ denoting a $n_y$-dimensional vector containing zeros.  

\noindent In the special case where $q=1$, the pair $p_{x,q}^{\cl}$ can be very simply realized as  
\begin{equation}
	\label{eq:Interval-Luenberger}
	\begin{aligned}
	&{p}_{x,1}^{\cl}(t+1)=\left|F(L)\right|{p}_{x,1}^{\cl}(t)+\left|G(L)\right|p_s(t), \quad   p_{x,1}^{\cl}(0)=p_x(0).
	\end{aligned}
\end{equation}
\noindent\rev{In this latter particular case, by expressing $\underline{x}(t)=c_x^{\cl}(t)-p_{x,1}^{\cl}(t)$ and $\overline{x}(t)=c_x^{\cl}(t)+p_{x,1}^{\cl}(t)$ from the solution $(c_x^{\cl},p_{x,1}^{\cl})$ of the system  given in \eqref{eq:cx-CL} and \eqref{eq:Interval-Luenberger} respectively, we obtain a dynamical system which is similar to the interval observer given in \cite{Efimov13-ECC}.
Note however that both are not the same  as the current expression provides a more flexible way for handling  the measurement noise. Further, as will be stated below, our condition of stability (which is $\rho(|A-LC|)<1$) reads very simply in this case.}

The following statement can be obtained. 
\begin{thm}\label{thm:Gain-L-stability}
Consider the system \eqref{eq:LTI} under Assumption \ref{assum:Bounding}. If the pair $(A,C)$ is detectable, then there exists a matrix gain $L\in \Re^{n\times n_y}$ and an integer $q^\star$ such that for any $q\geq q^\star$, $(c_{x}^{\cl}, {p}_{x,q}^{\cl})$ expressed in \eqref{eq:cx-CL}-\eqref{eq:phat-CL} defines an interval-valued estimator for system \eqref{eq:LTI}. 
In particular, if  $\rho(\left|A-LC\right|)<1$, then  $(c_x^{\cl},p_{x,1}^{\cl})$ given in \eqref{eq:cx-CL} and \eqref{eq:Interval-Luenberger} respectively, defines an interval-valued estimator  for system \eqref{eq:LTI}.  
\end{thm}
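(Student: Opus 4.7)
The plan is to check the two requirements of Definition \ref{def:Interval-Estimator} separately: the outer‐bounding of the true trajectory, and BIBO stability. For the bounding property, I would first recall that \eqref{eq:observer} is an exact rewriting of the true state equation, so $x(t+1)=F(L)x(t)+G(L)s(t)$ holds with the extended input $s(t)\in[c_s(t)-p_s(t),\,c_s(t)+p_s(t)]$ (the block of $p_s$ corresponding to the measured output $y$ is taken to be zero, encoding that $y(t)$ is exact). Structurally this is precisely the open-loop situation of Section~\ref{sec:Open-Loop} with $(A,B,w)$ replaced by $(F(L),G(L),s)$. Theorem \ref{thm:Tight-Open-Loop} therefore yields the tightest interval in center-radius form $(\hat c(\cdot,L),\hat p(\cdot,L))$ of \eqref{eq:cx(t)-CL}–\eqref{eq:px(t)-CL}, and Proposition \ref{prop:pxq} (applied to the same rewritten system) gives $\hat p(t,L)\le p_{x,q}^{\cl}(t)$ componentwise for every $q$ and every $t$. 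Combining these two facts gives $c_x^{\cl}(t)-p_{x,q}^{\cl}(t)\le x(t)\le c_x^{\cl}(t)+p_{x,q}^{\cl}(t)$ whenever $\underline x(0)\le x(0)\le \overline x(0)$, which is requirement~(1).

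For BIBO stability, I would handle the center and the radius dynamics independently. Detectability of $(A,C)$ provides some $L$ with $\rho(F(L))<1$, so \eqref{eq:cx-CL} is a standard Schur linear system driven by the bounded input $c_s$ and is therefore BIBO stable. For the radius, on the subsampled time scale of step $q$, \eqref{eq:phat-CL} is a linear system with transition matrix $|F(L)^q|$ and bounded forcing. I would then recycle the final paragraph of the proof of Proposition~\ref{prop:pxq}: because $\rho(F(L))<1$ we have $\|F(L)^q\|_F\to 0$, so there exists $q^\star$ with $\|F(L)^q\|_F<1$ for all $q\ge q^\star$, whence
\[
\rho(|F(L)^q|)\le \||F(L)^q|\|_F=\|F(L)^q\|_F<1
\]
by \eqref{eq:equality-norm}. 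This makes \eqref{eq:phat-CL} BIBO stable for all $q\ge q^\star$ and proves the first assertion.

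For the second assertion ($q=1$), the hypothesis $\rho(|A-LC|)<1$ is by definition the stability condition of the recursion \eqref{eq:Interval-Luenberger} with nonnegative transition matrix $|F(L)|$ and bounded input $|G(L)|p_s$, so BIBO stability of $p_{x,1}^{\cl}$ is immediate. In addition, the singleton case of Lemma \ref{lem:rho} gives $\rho(F(L))\le \rho(|F(L)|)<1$, so $F(L)$ is Schur and \eqref{eq:cx-CL} is BIBO stable as well. The bounding property has already been established above with $q=1$, completing the proof. I do not anticipate any serious obstacle, since the result essentially specializes Proposition~\ref{prop:pxq} and Theorem~\ref{thm:Tight-Open-Loop} to the observer-form rewriting \eqref{eq:observer}; the only slightly delicate point is booking the measurement noise correctly in $(c_s,p_s)$ so that the bounding argument applies verbatim.
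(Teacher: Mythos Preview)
Your proposal is correct and follows essentially the same route as the paper: invoke Proposition~\ref{prop:pxq} (applied to the observer-form rewriting \eqref{eq:observer}) for the bounding property, and reuse the $\|F(L)^q\|_F\to 0$ argument from the proof of Proposition~\ref{prop:pxq}, point~2), to obtain $q^\star$ for BIBO stability. Your version is in fact slightly more careful than the paper's in that you explicitly separate the center and radius dynamics and, for the $q=1$ case, you close the loop on center stability via $\rho(F(L))\le\rho(|F(L)|)<1$ from Lemma~\ref{lem:rho}, which the paper leaves implicit.
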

\begin{proof}
We already know from the first statement of Proposition \ref{prop:pxq}  that  $(c_x^{\cl},{p}_{x,q}^{\cl})$ satisfies the bounding condition of Definition  \ref{def:Interval-Estimator} in the sense that the state of system \eqref{eq:LTI} obeys $c_x^{\cl}(t)-p_{x,q}^{\cl}(t)\leq x(t)\leq c_x^{\cl}(t)+p_{x,q}^{\cl}(t)$. It remains to prove the BIBO stability of the system $(p_x(0),p_s)\mapsto (c_x^{\cl},{p}_{x,q}^{\cl})$.  
For this purpose, recall from control theory that the detectability property of $(A,C)$ guarantees the existence of  $L\in \Re^{n\times n_y}$ such that $\rho(A-LC)<1$. We can then find, as shown in the proof of Proposition \ref{prop:pxq}, point 2), an integer $q^\star$ such that $\rho(|A-LC|^q)<1$ for any $q\geq q^\star$. Hence for any $L$ satisfying $\rho(A-LC)<1$   there exists $q^\star$ such that for any $q\geq q^\star$, $(c_{x}^{\cl}, {p}_{x,q}^{\cl})$ constitutes an interval-valued estimator for system \eqref{eq:LTI} in the sense of Definition  \ref{def:Interval-Estimator}. In the particular case where $q$ is taken equal to $1$,  the BIBO stability condition for $(p_x(0),p_s)\mapsto (c_x^{\cl},{p}_{x,1}^{\cl})$ holds if  $\rho(\left|A-LC\right|)<1$ hence proving the last statement of the theorem.     
\end{proof}

\noindent To obtain the estimator $(c_{x}^{\cl}, {p}_{x,q}^{\cl})$ we just need to select a matrix $L\in \Re^{n\times n_y}$ obeying $\rho(A-LC)<1$ and then find the minimum $q^\star$ by testing incrementally whether $\rho(|A-LC|^q)<1$ for $q=1,2, \ldots$ until obtaining a positive return. From a numerical standpoint, obtaining $L$ and $q^\star$ under the stability constraint is quite easy by standard routines. Observability (or just detectability) of $(A,C)$ is a sufficient condition for ensuring the existence of an interval-valued estimator of the form $(c_{x}^{\cl}, {p}_{x,q}^{\cl})$.  Now if we impose $q=1$ for the design (for example, for computational reasons), then $L$ must satisfy $\rho(\left|A-LC\right|)<1$ as prescribed by Theorem \ref{thm:Gain-L-stability}.  In this latter case, a question is how to effectively select a matrix gain $L\in \Re^{n\times n_y}$, when possible, so as to realize this condition. An answer is provided by the following lemma. A similar (but  only sufficient) result was obtained in \cite{Efimov13-ECC}.  
\color{black}

\begin{lem}\label{lem:LMI}
The following statements are equivalent:
\begin{enumerate}
	\item[(a)] There exists $L\in \Re^{n\times n_y}$  such that $\rho(\left|A-LC\right|)<1$.   
\item[(b)] There exist a \textit{diagonal positive definite} matrix $P\in \Re_+^{n\times n}$  and some matrices $Y\in \Re^{n\times n_y}$, $X\in \Re_+^{n\times n}$ satisfying the conditions:
\begin{equation}\label{eq:LMI}
	\begin{aligned}
		&\BM P & X\\ X^\top & P \EM\succ 0\\
		& |P A-YC|\leq X\\
	\end{aligned}
\end{equation}
\end{enumerate}
Whenever \eqref{eq:LMI} holds, the matrix $L$ in (a) is given by $L=P^{-1}Y$. 
\end{lem}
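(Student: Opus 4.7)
The plan is to handle both implications through the Schur complement, with the change of variable $L=P^{-1}Y$ (well-defined because $P$ is diagonal positive) as the bridge between the LMI and the spectral-radius condition. Only one non-trivial ingredient from the theory of discrete-time positive linear systems will be required, and only for the harder direction.

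For (b)$\Rightarrow$(a), I would first factor $|PA-YC|=|P(A-LC)|=P|A-LC|$ using diagonal positivity of $P$, so that the inequality $|PA-YC|\leq X$ rewrites as $|A-LC|\leq P^{-1}X$ componentwise with both sides nonnegative. Monotonicity of the spectral radius on the nonnegative cone (the singleton case of Lemma~\ref{lem:stability-implication}) then gives $\rho(|A-LC|)\leq \rho(P^{-1}X)$. Applying the Schur complement to the LMI yields $X^\top P^{-1}X\prec P$, which is equivalent to $\|P^{-1/2}XP^{-1/2}\|_2<1$. Since $P^{-1}X$ is similar to $P^{-1/2}XP^{-1/2}$ via the diagonal $P^{1/2}$, the two share the same spectrum, and chaining inequalities gives $\rho(|A-LC|)\leq \|P^{-1/2}XP^{-1/2}\|_2<1$, proving (a).

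For (a)$\Rightarrow$(b), set $M:=|A-LC|\geq 0$ with $\rho(M)<1$ and try the natural ansatz $Y=PL$, $X=PM$ for a diagonal $P\succ 0$ to be selected. The inequality $|PA-YC|\leq X$ then holds with equality, and the Schur complement reduces the LMI precisely to $M^\top PM\prec P$. The problem therefore reduces to producing a diagonal $P\succ 0$ satisfying this Lyapunov inequality for the nonnegative Schur matrix $M$. This is the classical diagonal Lyapunov existence theorem for discrete-time positive systems, which I would invoke as a known result from the positive-systems literature rather than reprove.

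The main obstacle is precisely this last step. The general discrete-time Lyapunov theorem already guarantees some symmetric $P\succ 0$ with $M^\top PM\prec P$, but restricting $P$ to be diagonal is non-trivial and exploits the nonnegative structure of $M$ via Perron--Frobenius-type arguments. Once this fact is granted, both implications collapse to the Schur complement together with the reversible change of variable $Y\leftrightarrow PL$.
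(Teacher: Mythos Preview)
Your proposal is correct and follows essentially the same route as the paper: both directions hinge on the change of variable $Y=PL$, the factorization $|PA-YC|=P|A-LC|$ afforded by diagonal positivity of $P$, the Schur complement, and the diagonal Lyapunov characterization of Schur stability for nonnegative matrices (which the paper cites as Theorem~15 in \cite{Farina00-Book}). The only cosmetic difference is that, in the (b)$\Rightarrow$(a) direction, you conclude $\rho(P^{-1}X)<1$ via the similarity $P^{-1}X\sim P^{-1/2}XP^{-1/2}$ and a $2$-norm bound, whereas the paper reads $X^\top P^{-1}X\prec P$ directly as the Lyapunov inequality $(P^{-1}X)^\top P(P^{-1}X)\prec P$.
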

\begin{proof}
Since $|A-LC|$ is a nonnegative matrix, we can apply Theorem 15 in \cite[p. 41]{Farina00-Book} to state that $\rho(\left|A-LC\right|)<1$ if and only if 
there exists a diagonal and positive definite matrix $P$ such that
$$|A-LC|^\top P|A-LC|-P\prec 0, $$
a condition which can be rewritten as
\begin{equation}\label{eq:PA-YC}
	|PA-YC|^\top P^{-1}|PA-YC|-P\prec 0. 
\end{equation}
where $Y=PL$. Now by posing $X=|PA-YC|$ and making use of the Schur complement trick, we see that (a) $\Rightarrow$ (b).  \\
Let us now prove that (b) $\Rightarrow$ (a). It is clear that if (b) holds then the first equation of condition \eqref{eq:LMI} implies, by the Schur complement rule,   that 
$X^\top P^{-1}X-P=(P^{-1}X)^\top P(P^{-1}X)-P\prec 0$ which  in turn means that $P^{-1}X$ is a Schur matrix, i.e.,  $\rho(P^{-1}X)<1$. Moreover,   it follows from the last condition of \eqref{eq:LMI},  the nonnegativity of $P$ and its diagonal structure (upon  multiplying each side on the left by $P^{-1}$), that $0\leq P^{-1}|PA-YC|=|A-LC|\leq P^{-1}X$. By applying  Lemma \ref{lem:stability-implication}, we conclude that $\rho(|A-LC|)\leq \rho(P^{-1}X)<1$. 
\end{proof}
Lemma \ref{lem:LMI} shows that one can compute the observer gain $L$ efficiently by solving a feasibility problem which is expressible in terms of Linear Matrix Inequalities (LMI) \cite{Boyd97-LMI-Book}. 

\noindent\rev{In comparison to some other results \cite{Wang18-SCL,Dinh19-IJC,Efimov16-ARC}, we do not put any positivity constraint on $A-LC$ for the existence of $L$ hence, yielding less conservative design conditions. In effect, 
if we introduce the notations  $\mathcal{L}_{\text{pos}}=\left\{L\in \Re^{n\times n_y}: \rho(A-LC)<1, A-LC\geq 0\right\}$ and $\mathcal{L}_{\text{abs}}=\left\{L\in \Re^{n\times n_y}: \rho(|A-LC|)<1\right\}$ for a given pair $(A,C)$, then clearly,  $\mathcal{L}_{\text{pos}}\subset \mathcal{L}_{\text{abs}}$. As a result, a search of the matrix gain $L$ over $\mathcal{L}_{\text{abs}}$ is more likely to be successful than a search over $\mathcal{L}_{\text{pos}}$.  
}

\begin{rem}
In addition to ensuring stability as required by Theorem \ref{thm:Gain-L-stability}, the matrix gain $L$ could be selected so as to achieve a certain level of convergence speed or to optimize a certain performance index, expressed for example in term of a norm of the system $(p_x(0),p_s)\mapsto {p}_{x,q}^{\cl}$ defined in \eqref{eq:phat-CL}. 
Such an approach was adopted in, for example, \cite{Rami08-CDC,Wang18-SCL,Briat16-Automatica}. Here however, our focus is rather on  characterizing tightness in the sense of Definition \ref{def:smallest-interval} for a given matrix gain $L$. 
\end{rem}

\section{Application to switched linear systems}\label{sec:sls}
We consider now applying the estimation  method discussed earlier to switched linear systems described by equations of the form
\begin{equation}\label{eq:SLS}
	\begin{aligned}
		&x(t+1)=A_{\sigma(t)}x(t)+B_{\sigma(t)}w(t)  \\
		&y(t)= C_{\sigma(t)}x(t)+v(t), 
	\end{aligned}
\end{equation}
where $(x,y,w,v)$ have the same significance as in \eqref{eq:LTI},  $\sigma:\mathbb{Z}_+\rightarrow\mathbb{S}$ with $\mathbb{S}=\left\{1,\ldots,s\right\}$ being a finite set, is the switching signal and  $(A_i,B_i,C_i)$, $i\in \mathbb{S}$, are the system matrices. We will consider that $(w,v)$ are still subject to Assumption \ref{assum:Bounding}.

\subsection{Settings for the state estimation}

\begin{assumption}\label{assum:switching-signal}
The switching signal $\sigma$ involved in \eqref{eq:SLS} is taken to be arbitrary but known. 
\end{assumption}

The first step of the estimator design method is to observe that for any set of matrices $\left\{L_i\in \Re^{n\times n_y}:i\in \mathbb{S}\right\}$, the state of the switched linear system \eqref{eq:SLS} obeys
\begin{equation}\label{eq:observer-switched}
	x(t+1) =F_{\sigma(t)}x(t)+G_{\sigma(t)}s(t)
\end{equation}
where $\sigma$ is the same switching signal as in \eqref{eq:SLS}, $F_i=A_i-L_iC_i$, $G_i=\bbm B_i & L_i & -L_i\eem$ for $i\in \mathbb{S}$ and  $s(t) = \bbm  w(t)^\top  & y(t)^\top & v(t)^\top\eem^\top$. 
Let $\Phi$ denote the transition matrix function of \eqref{eq:observer-switched} defined for all $(t,t_0)$ with $t\geq t_0$,  by
$$\Phi(t,t_0,\sigma)=\left\{\begin{array}{lll}I & & t=t_0\\F_{\sigma(t-1)}\cdots F_{\sigma(t_0)} & & t>t_0\end{array}\right.$$  
Then the state $x(t)$ of \eqref{eq:SLS} can be expressed as 
\begin{equation}\label{eq:x-sls}
	x(t)=\Phi(t,0,\sigma)x(0)+\sum_{j=0}^{t-1}\Phi(t,j+1,\sigma)G_{\sigma(j)}s(j).
\end{equation}
The second ingredient is an appropriate concept of stability on the homogenous part of \eqref{eq:observer-switched} defined by $z(t+1)=F_{\sigma(t)}z(t)$. In view of Assumption \ref{assum:switching-signal},  this notion of stability must hold regardless of the switching signal. It is therefore merely  a property of the finite set $\Sigma_F\triangleq \left\{F_i\in \Re^{n\times n}:  i\in \mathbb{S}\right\}$ of square matrices, hence the following definition. 
\begin{definition}[Stability of a finite set of matrices]\label{def:exp-stability}
The homogenous part of the discrete-time SLS \eqref{eq:observer-switched} (or equivalently, the finite collection  $\Sigma_F$ of matrices) is called 
\begin{itemize}
	\item  \emph{uniformly stable} if there is $c>0$ such that  for all $q\geq 1$ and for all $(i_1,\cdots,i_q)\in \mathbb{S}^q$, $\|F_{i_1}\cdots F_{i_q}\|_2\leq c $
	\item \emph{uniformly exponentially stable (u.e.s.)} if there exist some real numbers $c>0$ and $\lambda\in \interval[open]{0}{1}$  such that for all $q\geq 1$ and for all $(i_1,\cdots,i_q)\in \mathbb{S}^q$, 
	$\|F_{i_1}\cdots F_{i_q}\|_2\leq c\lambda^q $. 
\end{itemize}
\end{definition} 
An algebraic characterization of the stability of a switched linear system (or equivalently, the stability of a finite set of square  matrices) was  obtained in \cite{Sun11-Book} in terms of the joint spectral radius as follows. 
\begin{thm}[\cite{Sun11-Book}]\label{thm:algrebraic-cond-stability}
Let  $\Sigma=\left\{A_1,\ldots,A_m\right\}\subset \Re^{n\times n}$ be a finite collection of square matrices. Then 
$\Sigma$ is u.e.s. if and only if $\rho(\Sigma)<1$.  
\end{thm}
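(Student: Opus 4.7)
The plan is to prove both implications directly from the definition of the joint spectral radius in \eqref{eq:RhoBar}, exploiting the fact (stated after \eqref{eq:RhoBar}) that $\rho(\Sigma)$ is independent of the chosen norm, so that we may work throughout with the operator $2$-norm appearing in Definition \ref{def:exp-stability}. Both directions then reduce to careful manipulation of the quantity $\|U_1\cdots U_k\|_2^{1/k}$ and one auxiliary bookkeeping step to convert asymptotic information into a uniform bound valid for every length $q \geq 1$.

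For the necessity direction (u.e.s. $\Rightarrow$ $\rho(\Sigma)<1$), assume there exist $c>0$ and $\lambda\in(0,1)$ with $\|F_{i_1}\cdots F_{i_q}\|_2 \leq c\lambda^q$ uniformly in the index sequence. Taking the $q$-th root yields $\|F_{i_1}\cdots F_{i_q}\|_2^{1/q} \leq c^{1/q}\lambda$ for every choice of $(i_1,\ldots,i_q)\in \mathbb{S}^q$. Since the right-hand side does not depend on the indices, it also bounds the supremum over all length-$q$ products; letting $q\to\infty$ gives $\rho(\Sigma)\leq \lambda<1$ via \eqref{eq:RhoBar}.

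For the sufficiency direction ($\rho(\Sigma)<1 \Rightarrow$ u.e.s.), fix any $\lambda\in(\rho(\Sigma),1)$. From \eqref{eq:RhoBar} applied with the operator $2$-norm, there exists $K\in \mathbb{N}$ such that for every $k\geq K$,
\begin{equation*}
\sup\bigl\{\|U_1\cdots U_k\|_2^{1/k}:U_j\in\Sigma\bigr\}\leq \lambda,
\end{equation*}
hence $\|U_1\cdots U_k\|_2 \leq \lambda^k$ for all products of length at least $K$. For shorter products of length $q<K$, submultiplicativity of $\|\cdot\|_2$ gives the crude bound $\|U_1\cdots U_q\|_2 \leq M^q$ with $M\triangleq \max_{i\in\mathbb{S}}\|A_i\|_2$. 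Choosing $c\triangleq \max\bigl\{1,(M/\lambda)^{K}\bigr\}$ then yields $\|U_1\cdots U_q\|_2 \leq c\lambda^q$ for every $q\geq 1$: for $q\geq K$ the factor $c$ is superfluous, and for $q<K$ we have $M^q \leq (M/\lambda)^q \lambda^q \leq c\lambda^q$. This is exactly the u.e.s. bound of Definition \ref{def:exp-stability}.

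The only real obstacle is the bookkeeping in the sufficiency direction: the definition of $\rho(\Sigma)$ provides only asymptotic control on $\|U_1\cdots U_k\|_2^{1/k}$, whereas uniform exponential stability demands an inequality valid for \emph{every} length, including the finitely many short products. Inflating the multiplicative constant $c$ by $(M/\lambda)^K$ absorbs this transient regime, after which the result falls out of \eqref{eq:RhoBar}.
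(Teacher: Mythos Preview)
The paper does not supply its own proof of this theorem: it is quoted from \cite{Sun11-Book} and used as a black box, so there is nothing to compare against line by line. Your argument is nonetheless correct and self-contained. The necessity direction is immediate from \eqref{eq:RhoBar} as you say. In the sufficiency direction the only point requiring care is exactly the one you flag: \eqref{eq:RhoBar} gives asymptotic control while Definition~\ref{def:exp-stability} demands a bound for every $q\geq 1$; your choice $c=\max\{1,(M/\lambda)^K\}$ handles the finitely many short products correctly (note that $M^q=(M/\lambda)^q\lambda^q$ is an equality, not an inequality, but the conclusion $(M/\lambda)^q\leq c$ for $q<K$ holds in both the cases $M\leq\lambda$ and $M>\lambda$). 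One tacit assumption you rely on is that the limit in \eqref{eq:RhoBar} genuinely exists (rather than being a $\limsup$), so that $\rho(\Sigma)<\lambda$ forces $\sup_k\{\cdot\}\leq\lambda$ for all large $k$; the paper asserts this existence just after \eqref{eq:RhoBar}, so you are entitled to use it.
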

\noindent By Lemma \ref{lem:rho}, a direct corollary of this theorem is that there is an equivalence between stability of $|\Sigma|$ and that of $\Psi(\Sigma)$. Also, $\Sigma$ is stable whenever $|\Sigma|$ is stable.  

Now,  applying Lemma \ref{lem:Az+w} to \eqref{eq:x-sls} shows that for given gains $L_i$, the pair $(c_x^{\mbox{\tiny sls}},p_x^{\mbox{\tiny sls}})$ given by 
\begin{align}
&\!\!c_x^{\mbox{\tiny sls}}(t)\!=\!\Phi(t,0,\sigma)c_x(0)\!+\sum_{j=0}^{t-1}\Phi(t,j+1,\sigma)G_{\sigma(j)}c_s(j) \label{eq:cx-switched}\\
& \!\! p_x^{\mbox{\tiny sls}}(t)\!=\!|\Phi(t,0,\sigma)|p_x(0)\!+\!\sum_{j=0}^{t-1}|\Phi(t,j+1,\sigma)G_{\sigma(j)}|p_s(j) \label{eq:px-switched}
\end{align}

\noindent defines the tightest interval-valued estimator for the SLS \eqref{eq:SLS}   provided that the BIBO stability condition of Definition \ref{def:Interval-Estimator}  is satisfied. And it can be easily seen that the dynamic systems defined by \eqref{eq:cx-switched} and \eqref{eq:px-switched} are BIBO stable  if $\Sigma_F$ is u.e.s. Note that, like in Theorem \ref{thm:Tightest-CL}, we can also optimize over all sets $\left\{L_i\right\}_{i\in \mathbb{S}}$ of gains that render $\Sigma_F$ stable. \\
Concerning the implementation aspects, it is useful to observe that the function $c_x^{\mbox{\tiny sls}}$ in \eqref{eq:cx-switched} satisfies the one step-ahead equation 
\begin{equation}\label{eq:Realization-cx-SLS}
	c_x^{\mbox{\tiny sls}}(t+1)=F_{\sigma(t)}c_x^{\mbox{\tiny sls}}(t)+G_{\sigma(t)}c_s(t), \quad t\in \mathbb{Z}_+
\end{equation}
with $c_x^{\mbox{\tiny sls}}(0)=c_x(0)$. 
In contrast, realizing $p_x^{\mbox{\tiny sls}}$ in finite dimension is, like in the case of linear systems, still a challenging problem.  Nevertheless, a time-invariant switched linear state-space realization of \eqref{eq:px-switched} can, under certain conditions, be obtained by resorting to the realization theory of switched linear systems presented in \cite{Petreczky13-Automatica} but we will not elaborate more on this problem here. Turning instead to over-approximations of the estimator, it is interesting to see that  the truncated estimate discussed in Section \ref{subsub:Truncated} is applicable here as well. In particular, when the truncation order $q$ is equal to $1$, we obtain 
\begin{align}
&\hat{c}_x^{\mbox{\tiny sls}}(t+1)=F_{\sigma(t)}\hat{c}_x^{\mbox{\tiny sls}}(t)+ G_{\sigma(t)}p_s(t)\label{eq:cx-hat-sls} \\
	&\hat{p}_x^{\mbox{\tiny sls}}(t+1)=|F_{\sigma(t)}|\hat{p}_x^{\mbox{\tiny sls}}(t)+ |G_{\sigma(t)}|p_s(t) \label{eq:px-hat-sls}
\end{align}
with $\hat{c}_x^{\mbox{\tiny sls}}(0)=c_x(0)$ and  $\hat{p}_x^{\mbox{\tiny sls}}(0)=p_x(0)$. 
Thanks to the statement \eqref{eq:RhoSigma} of Lemma \ref{lem:rho}  and Theorem \ref{thm:algrebraic-cond-stability}, these latter equations define an interval estimator if 
\begin{equation}\label{eq:Sigma-absF}
	|\Sigma_{F}|\triangleq \left\{|F_i|=|A_i-L_iC_i|\in \Re^{n\times n}:  i\in \mathbb{S}\right\}
\end{equation}
is u.e.s. in the sense of Definition \ref{def:exp-stability}. 
Hence the question we discuss next is how to select the matrix gains $L_i\in \Re^{n\times n_y}$ so as to render $|\Sigma_{F}|$ u.e.s., hence making \eqref{eq:cx-hat-sls}-\eqref{eq:px-hat-sls} a valid interval estimator.  Observe that $|\Sigma_{F}|$ is a  positive  discrete-time switched linear system just as those  studied in \cite{Blanchini15-FTSC} (in continuous-time) considered under arbitrary switching signal. 

\subsection{Guaranteeing the stability condition}
In this section, we derive a  tractable condition for computing effectively  gains $L_i$ which ensure exponential stability of $|\Sigma_{F}|$. For this purpose, we will need the following lemma. 
\begin{lem}\label{lem:stability-implication-v2}
Consider two finite collections of nonnegative matrices $\Sigma=\left\{A_i\in \Re_+^{n\times n}:  i\in \mathbb{S}\right\}$ and $\overline{\Sigma}=\left\{\bar{A}_i\in \Re_+^{n\times n}:   i\in \overline{\mathbb{S}}\right\}$, where $\mathbb{S}$ and $\overline{\mathbb{S}}$ are finite sets with possibly different cardinalities. 
If for any $i\in \mathbb{S}$, there is $j\in \overline{\mathbb{S}}$ such that $A_i\leq \bar{A}_{j}$, then $\Sigma$ is uniformly stable (resp. u.e.s.) whenever $\overline{\Sigma}$ is uniformly stable (resp. u.e.s.). 
\end{lem}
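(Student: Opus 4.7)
The plan is to carry out essentially the same argument as in Lemma \ref{lem:stability-implication}, but carrying a quantitative bound instead of just a spectral-radius inequality, since both ``uniform stability'' and ``u.e.s.'' are quantitative properties of finite products.

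First, I would show that the hypothesis lifts to arbitrary length products: for any $q\ge 1$ and any $(i_1,\ldots,i_q)\in \mathbb{S}^q$, one can choose $(j_1,\ldots,j_q)\in \overline{\mathbb{S}}^q$ such that
\[
0\leq A_{i_1}\cdots A_{i_q}\leq \bar{A}_{j_1}\cdots\bar{A}_{j_q}.
\]
This is the same induction used in the proof of Lemma \ref{lem:stability-implication}: pick $j_k$ so that $A_{i_k}\leq \bar{A}_{j_k}$ and use the fact that for nonnegative matrices $P\leq Q$ and $0\leq R$ imply $PR\leq QR$ and $RP\leq RQ$.

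Next, I would translate this elementwise bound into a norm bound using Lemma \ref{lem:identities}. By \eqref{eq:equality-norm} and \eqref{eq:inequality-norm}, $\|A_{i_1}\cdots A_{i_q}\|_F\leq \|\bar{A}_{j_1}\cdots\bar{A}_{j_q}\|_F$. Since on $\Re^{n\times n}$ the Frobenius norm and the $2$-norm are equivalent, there are constants $\kappa_1,\kappa_2>0$ (depending only on $n$) with $\|\cdot\|_2\leq \kappa_1\|\cdot\|_F$ and $\|\cdot\|_F\leq \kappa_2\|\cdot\|_2$. Combining,
\[
\|A_{i_1}\cdots A_{i_q}\|_2\leq \kappa_1\kappa_2\,\|\bar{A}_{j_1}\cdots\bar{A}_{j_q}\|_2.
\]

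From this the two conclusions are immediate. If $\overline{\Sigma}$ is uniformly stable with bound $\bar{c}$, then $\|A_{i_1}\cdots A_{i_q}\|_2\leq \kappa_1\kappa_2\bar{c}$ for every length $q$ and every choice of indices, so $\Sigma$ is uniformly stable. Similarly, if $\overline{\Sigma}$ is u.e.s.\ with constants $(\bar{c},\lambda)$, $\lambda\in (0,1)$, the same estimate yields $\|A_{i_1}\cdots A_{i_q}\|_2\leq (\kappa_1\kappa_2\bar{c})\lambda^q$, i.e.\ $\Sigma$ is u.e.s.\ with the same decay rate $\lambda$ (and a possibly enlarged constant).

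The only step that requires any care is the inductive lifting of the elementwise bound to products; everything else is a routine use of \eqref{eq:inequality-norm}, \eqref{eq:equality-norm} and the equivalence of matrix norms on a finite-dimensional space. I do not foresee a real obstacle, and in fact Lemma \ref{lem:stability-implication} can be recovered as a corollary by taking $q$-th roots and letting $q\to\infty$.
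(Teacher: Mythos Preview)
Your argument is correct, but it takes a different route from the paper. The paper's proof is a one-liner: it invokes Lemma~\ref{lem:stability-implication} to get $\rho(\Sigma)\leq\rho(\overline{\Sigma})$ and then appeals to Theorem~\ref{thm:algrebraic-cond-stability} (u.e.s.\ $\Leftrightarrow$ $\rho<1$) to transfer stability. Your approach instead works directly at the level of length-$q$ products, lifting the elementwise domination to products and then passing to a norm inequality via \eqref{eq:inequality-norm}--\eqref{eq:equality-norm} and norm equivalence. This buys you something the paper's argument, as written, does not: the spectral-radius characterization in Theorem~\ref{thm:algrebraic-cond-stability} is stated only for u.e.s., and uniform stability (mere boundedness of all products) is \emph{not} equivalent to $\rho(\Sigma)\leq 1$, so the paper's shortcut does not obviously cover the ``uniformly stable'' half of the statement. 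Your direct argument handles both cases at once and even preserves the decay rate $\lambda$ in the u.e.s.\ case. The paper's route is shorter because it leverages machinery already in place; yours is more self-contained and slightly more informative.
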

\begin{proof}
This is a direct consequence of Lemma \ref{lem:stability-implication} and Theorem \ref{thm:algrebraic-cond-stability}.   
\end{proof}
\noindent It also follows naturally that if $\Sigma$ is u.e.s., then so is any non empty subset of $\Sigma$.  \\
Now the main result of this section can be stated as follows. 
\begin{thm} \label{thm:stability-estimator-SLS}
The following chain of implications hold: $(A)\Rightarrow (B) \Rightarrow (C)$ where
(A), (B), (C) correspond to the following statements:
\begin{enumerate}
\item[(A)] There exist some matrices $L_i\in \Re^{n\times n_y}$, $i\in \mathbb{S}$, 
	 \rev{diagonal matrices $P_i\in \Re_+^{n\times n}$ satisfying  $P_i\succ 0$, $i\in \mathbb{S}$}, and a strictly positive number $\gamma>0$ such that 
	\begin{equation}\label{eq:SLS-Stability-LMI}
		|A_i-L_iC_i|^\top P_j|A_i-L_iC_i|-P_i\preceq -\gamma I 
	\end{equation}
	for all $(i,j)\in \mathbb{S}^2$. 
	\item[(B)] There exist some matrices  $X_{ji}\in \Re_+^{n\times n}$,  $Y_{ji}\in \Re^{n\times n_y}$, $(i,j)\in \mathbb{S}^2$, some diagonal positive-definite matrices $\Lambda_i\in\Re^{n\times n}$, $i\in \mathbb{S}$, and a real number $\eta>0$ such that
\begin{align}
		&\BM \Lambda_j & X_{ji}\\ X_{ji}^\top & \: \: \Lambda_i-\eta I \EM\succeq 0 \label{eq:LMI-SLS}\\
		& \left|\Lambda_j A_i-Y_{ji}C_i\right|\leq X_{ji} \label{eq:Ineq-SLS}
\end{align}
for all $(i,j)\in \mathbb{S}^2$. 
\item[(C)] $|\Sigma_{F}|$ in \eqref{eq:Sigma-absF} is u.e.s. with $L_i=\Lambda_i^{-1}Y_{ii}$, $i\in \mathbb{S}$. 
\end{enumerate}
\end{thm}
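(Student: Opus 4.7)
For the implication $(A)\Rightarrow (B)$, the plan is a direct substitution. Setting $\Lambda_j := P_j$, $\eta := \gamma$, $Y_{ji} := P_j L_i$, and $X_{ji} := P_j |A_i-L_iC_i|$, I would verify both conditions in (B) in turn. The entrywise inequality $|\Lambda_j A_i - Y_{ji}C_i|\leq X_{ji}$ holds with equality, since $P_j$ is diagonal and nonnegative: $|P_j(A_i-L_iC_i)| = P_j|A_i-L_iC_i|$. For the LMI $\begin{bmatrix}\Lambda_j & X_{ji}\\ X_{ji}^\top & \Lambda_i-\eta I\end{bmatrix}\succeq 0$, I would invoke the Schur complement (applicable because $\Lambda_j = P_j \succ 0$), obtaining the equivalent form $\Lambda_i - \eta I \succeq X_{ji}^\top \Lambda_j^{-1} X_{ji}$. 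A short calculation gives $X_{ji}^\top \Lambda_j^{-1} X_{ji} = |F_i|^\top P_j |F_i|$, so the condition reduces to exactly the inequality stated in (A).

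For the implication $(B)\Rightarrow (C)$, I would fix $L_i := \Lambda_i^{-1}Y_{ii}$ as prescribed and aim to establish the multi-Lyapunov inequality $|F_i|^\top \Lambda_j |F_i| \preceq \Lambda_i - \eta I$ for every pair $(i,j)\in \mathbb{S}^2$. The natural starting point is the $(i,i)$-instance of (B), which gives the entrywise bound $\Lambda_i |F_i| = |\Lambda_i A_i - Y_{ii}C_i| \leq X_{ii}$, i.e. $|F_i| \leq \Lambda_i^{-1} X_{ii}$. Combining this elementwise estimate with the Schur-complement form $X_{ji}^\top \Lambda_j^{-1} X_{ji} \preceq \Lambda_i - \eta I$, and exploiting the diagonal structure of the $\Lambda_i$'s (which ensures $|z|^\top \Lambda_i |z| = z^\top \Lambda_i z$), I would obtain the diagonal case $|F_i|^\top \Lambda_i |F_i| \preceq \Lambda_i - \eta I$ in one line. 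For the off-diagonal case $j\neq i$, the analogous argument would proceed by using the $(i,j)$-instance of (B) to bound $\Lambda_j |F_i|$ via its entrywise absolute value. I expect this off-diagonal step to be the main obstacle: the variable $Y_{ji}$ is free and a priori unrelated to $L_i = \Lambda_i^{-1}Y_{ii}$, so one must carefully distribute the absolute value through $\Lambda_j$ and reconcile the single observer gain $L_i$ used in (C) with the pair-dependent structure appearing in (B).

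Once the multi-Lyapunov inequality is in hand, the conclusion follows by a standard switched-Lyapunov argument. Along any trajectory $z(t+1) = |F_{\sigma(t)}| z(t)$ of the positive switched system, set $V(t) := z(t)^\top \Lambda_{\sigma(t)} z(t)$. Then
\begin{equation*}
V(t+1) - V(t) = z(t)^\top\bigl(|F_{\sigma(t)}|^\top \Lambda_{\sigma(t+1)} |F_{\sigma(t)}| - \Lambda_{\sigma(t)}\bigr) z(t) \leq -\eta \|z(t)\|^2.
\end{equation*}
Using the coercivity bounds $\lambda_{\min}(\Lambda_i) \|z\|^2 \leq V_i(z) \leq \lambda_{\max}(\Lambda_i)\|z\|^2$, uniform over the finite set $\mathbb{S}$, this yields exponential decay of $V(t)$, and hence of $\|z(t)\|$, at a rate independent of the switching signal $\sigma$. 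By Theorem \ref{thm:algrebraic-cond-stability}, this is equivalent to $\rho(|\Sigma_F|) < 1$, giving the u.e.s. of $|\Sigma_F|$ claimed in (C).
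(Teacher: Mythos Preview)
Your argument for $(A)\Rightarrow(B)$ is correct and coincides with the paper's: the substitution $\Lambda_i=P_i$, $\eta=\gamma$, $Y_{ji}=P_jL_i$, $X_{ji}=|P_j(A_i-L_iC_i)|=P_j|A_i-L_iC_i|$ together with the Schur complement gives both conditions of $(B)$ immediately.

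For $(B)\Rightarrow(C)$, however, the obstacle you flag in the off-diagonal step is genuine, and your outline does not overcome it. The entrywise inequality \eqref{eq:Ineq-SLS} for a pair $(j,i)$ reads $|\Lambda_jA_i-Y_{ji}C_i|\leq X_{ji}$; pulling $\Lambda_j$ out gives $|A_i-\Lambda_j^{-1}Y_{ji}C_i|\leq\Lambda_j^{-1}X_{ji}$. This bounds the matrix $A_i-\Lambda_j^{-1}Y_{ji}C_i$, \emph{not} $F_i=A_i-\Lambda_i^{-1}Y_{ii}C_i$. Since the variables $Y_{ji}$ for $j\neq i$ are independent of $Y_{ii}$ in $(B)$, there is no way to extract from the $(j,i)$-instance an entrywise bound of the form $\Lambda_j|F_i|\leq X_{ji}$ or $|F_i|\leq\Lambda_j^{-1}X_{ji}$. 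Hence the target inequality $|F_i|^\top\Lambda_j|F_i|\preceq\Lambda_i-\eta I$ is unreachable along the route you sketch, and the switched-Lyapunov computation with $V(t)=z(t)^\top\Lambda_{\sigma(t)}z(t)$ breaks down precisely at the step $V(t+1)-V(t)\leq -\eta\|z(t)\|^2$ whenever $\sigma(t+1)\neq\sigma(t)$.

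The paper sidesteps this by \emph{not} attempting the multi-Lyapunov inequality for $|F_i|$ itself. It introduces the auxiliary collection $\Sigma'=\{\Lambda_j^{-1}X_{ji}:(i,j)\in\mathbb{S}^2\}$ and separates the argument into two pieces. First, the Schur complement of \eqref{eq:LMI-SLS}, rewritten as $(\Lambda_j^{-1}X_{ji})^\top\Lambda_j(\Lambda_j^{-1}X_{ji})\preceq\Lambda_i-\eta I$, is a bona fide switched-Lyapunov inequality for $\Sigma'$ (this is where the off-diagonal LMI instances are consumed), so $\Sigma'$ is u.e.s. Second, only the \emph{diagonal} instances of \eqref{eq:Ineq-SLS} are used, to obtain $|F_i|\leq\Lambda_i^{-1}X_{ii}\in\Sigma'$. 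The conclusion then follows from the domination principle of Lemma~\ref{lem:stability-implication} (equivalently Lemma~\ref{lem:stability-implication-v2}): every matrix of $|\Sigma_F|$ is entrywise dominated by a member of the stable nonnegative set $\Sigma'$, hence $\rho(|\Sigma_F|)\leq\rho(\Sigma')<1$. This decoupling of ``stability of the dominating set'' from ``domination by one of its members'' is the missing idea that makes the free off-diagonal variables $Y_{ji}$ usable.
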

\begin{proof}
\textit{(A) $\Rightarrow$ (B):} If  (A) holds, then by using the facts that $P_j$ is diagonal, nonnegative and nonsingular, we can rewrite condition \eqref{eq:SLS-Stability-LMI} as 
$$|P_j(A_i-L_iC_i)|^\top P_j^{-1}|P_j(A_i-L_iC_i)|-P_i +\gamma I\preceq 0.$$
Now by setting $\eta=\gamma$, $Y_{ji}=P_jL_i$, $X_{ji}=|P_jA_i-Y_{ji}C_i)|$, $\Lambda_i=P_i$ and calling upon the Schur complement rule, the statement (B) follows.

\textit{(B) $\Rightarrow$ (C):}
Assume that condition (B) holds. Then, by Eq.  \eqref{eq:LMI-SLS}, we have, upon applying the Schur complement rule,  $\Lambda_i-\eta I-X_{ji}^\top \Lambda_j^{-1}X_{ji}\succeq 0$ for all $(i,j)\in\mathbb{S}^2$. Writing this in the form $$\Lambda_i-\eta I-(\Lambda_j^{-1}X_{ji})^\top \Lambda_j(\Lambda_j^{-1}X_{ji})\succeq 0$$
 reveals, by application of Theorem 28 in \cite[p.267]{Vidyasagar02-Book}, that indeed $\Sigma'\triangleq \left\{\Lambda_j^{-1}X_{ji}: (i,j)\in \mathbb{S}^2\right\}$ is exponentially stable in the sense of Definition \ref{def:exp-stability}. To see this, it suffices to consider the system defined by $z(t+1)=\tilde{A}(t)z(t)$ with $\tilde{A}(t)\in \Sigma'$ for all time index $t\in \mathbb{Z}_+$. Then the function $V:\mathbb{Z}_+\times \Re^n$ defined by $V(t,x)=x^\top \Lambda_j x$ whenever $\tilde A(t-1)= \Lambda_j^{-1}X_{ji}$, $t\geq 1$ and  $V(0,x)=x^\top \Lambda_{j_0} x$ for an arbitrary $j_0\in \mathbb{S}$,  satisfies all the conditions of the theorem cited above. \\
 On the other hand, Eq. \eqref{eq:Ineq-SLS} imply that $\Lambda_j|A_i-\Lambda_j^{-1}Y_{ji}C_i|\leq X_{ji}$ since $\Lambda_j$ is positive and diagonal. This in turn implies that $|A_i-L_{ji}C_i|\leq \Lambda_j^{-1}X_{ji}$ where $L_{ji}=\Lambda_j^{-1}Y_{ji}$ for all $(i,j)\in \mathbb{S}^2$. Hence by applying Lemma \ref{lem:stability-implication} (observe that $|\Sigma_{F}|\subset \Sigma'$ if $L_i$ is taken to be equal to $L_{ii}=\Lambda_i^{-1}Y_{ii}$), we can conclude that $|\Sigma_{F}|$ is exponentially stable. 
\end{proof}
Theorem \ref{thm:stability-estimator-SLS} shows  that the problem of designing an interval-valued estimator in the form \eqref{eq:cx-hat-sls}-\eqref{eq:px-hat-sls} for the SLS system \eqref{eq:SLS} can be relaxed to the problem of solving the convex feasibility problem \eqref{eq:LMI-SLS}-\eqref{eq:Ineq-SLS} for the matrix gains $L_{ii}\in \Re^{n\times n_y}$, $i\in \mathbb{S}$. Hence a numerical solution can be efficiently obtained by relying on existing semi-definite programming solvers. 

\begin{rem}
Similarly as in the case of LTI systems (see Theorem \ref{thm:Gain-L-stability}), we can construct a $q$-order truncated interval estimator in the switched system case as well. In effect, it can be shown that for $\Sigma_F=\left\{F_i:i\in \mathbb{S}\right\}$, $\rho(\Sigma_F)<1$ implies the existence of some $q^\star$ such that $\rho(|\Sigma_F^q|)<1$ for all $q\geq q^\star$ with the notation $|\Sigma_F^q|$ defined by $|\Sigma_F^q|=\left\{|U_1\cdots U_q|: U_i\in \Sigma_F, i=1,\ldots,q\right\}$ (i.e., $|\Sigma_F^q|$ is the set of matrices formed as the product of $q$ members of $\Sigma_F$, taken in absolute value. Based on this fact, we can search for the $\left\{L_i\right\}$ by dropping the absolute values in the stability condition \eqref{eq:SLS-Stability-LMI}. The associated LMI takes the form
$$\begin{bmatrix}P_j & P_jA_i-Y_{ji}C_i\\ (P_jA_i-Y_{ji}C_i)^\top & P_i-\gamma I\end{bmatrix}\succeq 0$$
with variables $Y_{ji}\in \Re^{n\times n_y}$, $(i,j)\in \mathbb{S}^2$, $P_i\in \Re^{n\times n}$ (positive-definite and no longer diagonal), $i\in \mathbb{S}$,  $\gamma>0$. \\
 \noindent By then selecting an appropriate value for $q$,  the pair $(c_x^{\sls},p_{x,q}^{\sls}):\mathbb{Z}_+\rightarrow\Re^n\times \Re_+^n$ defined respectively in \eqref{eq:Realization-cx-SLS} and \eqref{eq:phat-CL-SLS} (see below)  constitute an interval-valued estimator for the switched system \eqref{eq:SLS}, 
\begin{equation}\label{eq:phat-CL-SLS}
	{p}_{x,q}^{\sls}(t)=
\left\{	\begin{array}{ll}
\Phi(t,0,\sigma)|p_x(0)\!+\!\sum_{j=0}^{t-1}|\Phi(t,j+1,\sigma)G_{\sigma(j)}|p_s(j)  &  \mbox{ if } t=0,\ldots,q\\
\Phi(t,t-q,\sigma)|p_{x,q}^{\sls}(t-q)\!+\!\sum_{j=t-q}^{t-1}|\Phi(t,j+1,\sigma)G_{\sigma(j)}|p_s(j), & \mbox{ if } t> q
	 \end{array}\right.
\end{equation}
The advantage of such an estimator is that its (relaxed) synthesis conditions (analogues of \eqref{eq:LMI-SLS}-\eqref{eq:Ineq-SLS}) are less conservative. 
Its inconvenience is an increase of computational cost in the numerical evaluation of $p_{x,q}^{\sls}$ in case $q$ is taken large.  
\end{rem}

\section{Numerical results}\label{sec:Simulations}
This section reports some simulation results that illustrate the performances of the interval estimators discussed in the paper. 
\subsection{An example of linear system in open loop}\label{subsec:LTI-open-loop}
We first consider a linear system in the  form \eqref{eq:LTI} in an open-loop configuration, that is, without making use of any measurement. 
In order to be able to implement all the estimators discussed in Section \ref{sec:Open-Loop}, the dynamic matrix  $A$ is selected such that $\rho(|A|)<1$, 
\begin{equation}\label{eq:Example-LTI} 
A=\BM 0.10 & 0.60 &  0.05\\
   0.20  & 0.35  & -0.50\\
   -0.55 &  -0.15 & 0.40\EM, 
\quad 
B = \BM -0.50\\0.70\\1\EM. 
\end{equation}
The set of initial states is defined by  $c_x(0)=\bbm 0.50 & -1 & -2\eem^\top$,  $p_x(0)=\bbm 3& 2& 4\eem^\top$.  
For any $t\in \mathbb{Z}_+$, we let the input intervals be defined by $c_w(t)=\sin(2\pi\nu_c t)$ and $p_w(t)=0.10|\cos(2\pi\nu_p t)|$ with $\nu_c=0.01$ and $\nu_p=0.001$.

We first check the realizability of $p_x$ in \eqref{eq:px(t)} in state-space form. 
For the example \eqref{eq:Example-LTI}, it is numerically found that the Hankel matrix $\mathcal{H}_{i,j}$ defined in \eqref{eq:Hankel} has finite constant rank equal to $6$ for sufficiently large $i$ and $j$. It therefore follows from Theorem \ref{thm:realization} that the tightest interval-valued estimator given by \eqref{eq:cx(t)}-\eqref{eq:px(t)} admits an LTI state-space realization $\left(\mathcal{A},\mathcal{B},\mathcal{C},\phi_0\right)$ as defined in Section \ref{subsec:realization} with minimal dimension $d=6$. Hence it can be cheaply implemented.   

With these data the estimators  defined in Eqs \eqref{eq:px(t)},  \eqref{eq:realization-px}, \eqref{eq:phat}  and \eqref{eq:phat1} are simulated. More precisely, $100$ possible state trajectories are obtained from inputs $\left\{w(t)\right\}$ and initial states $x(0)$ drawn  randomly from the corresponding intervals defined above. Figure \ref{fig:simulation_state} shows that all the estimators enclose the true state trajectories in gray. As proved in the paper,  \eqref{eq:px(t)} yields the smallest interval estimator. It is interesting to observe that the estimator \eqref{eq:phat} (which is implemented here for $q=1$ and $q=2$) can provide an estimate that is very close to the best one without $q$ being large. The estimate delivered by the estimator \eqref{eq:realization-px} (with here $r^o=0.3\geq p_w(t)$ for all $t$; see Section \ref{subsec:Approximation} for a definition of $r^o$) is a little worse but the worst of all on this example  is the result returned by \eqref{eq:phat1} which gives a quite large interval set.    
\begin{figure}[h!]
	\centering
	\psfrag{time}[][]{\tiny time samples}
	\psfrag{x}[][]{\scriptsize $x_1$} %
	\psfrag{xbar}{\tiny Tightest} 
	\psfrag{xbar1}{\tiny $q=1$} 
	\psfrag{xbar2}{\tiny $q=2$}
	\psfrag{xbar3}{\tiny app.$p_w$} 
\includegraphics[width=9cm,height=7cm]{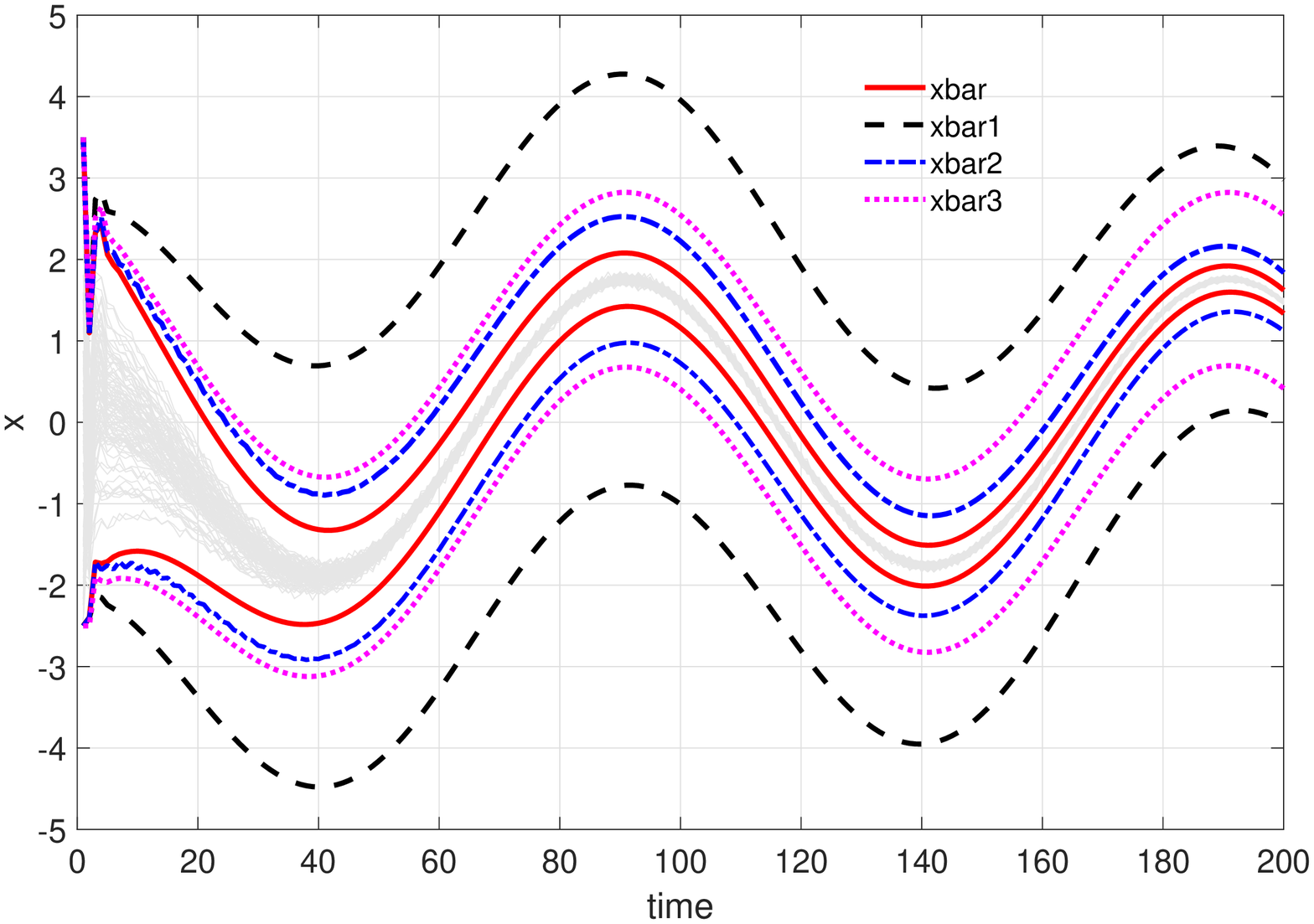}
\caption{Comparison of open-loop interval estimators: In red the tightest estimator defined in Eq. \eqref{eq:px(t)}; Truncated estimator Eq.\eqref{eq:phat} in black (for $q=1$) and in blue (for $q=2$);  the estimator Eq.\eqref{eq:realization-px} based on an over-approximation of $p_w$ in magenta. Only the first state components are represented here. }
	\label{fig:simulation_state}
\end{figure}

\subsection{An example of switched linear system}
We now consider an example of switched linear system  in the form \eqref{eq:SLS} with matrices $(A_i,B_i,C_i)$  defined as follows:
$$
\begin{array}{ccc}
A_1 =\BM -0.40  &  0.075  & -0.55\\
   -0.50  & -0.15 &  0.50\\
   -0.16  &  0.75 & 0.45\EM, 
&B_1 =\BM -0.60\\-1.20\\0.25\EM,  
& C_1^\top  =\BM 0 \\ -0.85 \\ -1\EM 
\end{array}
$$
$$
\begin{array}{ccc}
A_2 =\BM -0.30 & -0.20  &  0.50\\
          -0.25  & -0.80  & -0.15\\
          -0.45 & 0.6 & -0.25\EM, 
&B_2 =\BM 0.20\\-0.25 \\-1\EM,
& C_2^\top =\BM 0.50\\  0 \\0.15 \EM 
\end{array}
$$
$$
\begin{array}{ccc}
A_3 =\BM 0.25  & -0.70 &   0.15\\
          0.06 &-0.10 &  -0.70\\
          0.80  & 0.60 &0.15\EM, 
&B_3 =\BM 0\\ 0.40 \\1.85\EM, 
&C_3^\top =\BM 0.20 \\ -0.06\\ 2\EM
\end{array}
$$
The simulation is carried out for the closed-loop scenario in the following setting: the set of initial conditions and the set of  admissible input signals are kept the same as in Section \ref{subsec:LTI-open-loop}. As to the measurement noise $\left\{v(t)\right\}$, it is assumed to live in the constant interval $\interval{-0.1}{0.1}$. We then solve \eqref{eq:LMI-SLS}-\eqref{eq:Ineq-SLS} to find observer stabilizing gains $L_i$ and plug them in \eqref{eq:cx-hat-sls}-\eqref{eq:px-hat-sls}. Finally, estimating the state trajectory of the SLS example described above using the tightest estimator \eqref{eq:cx-switched}-\eqref{eq:px-switched} and the one in \eqref{eq:cx-hat-sls}-\eqref{eq:px-hat-sls} gives the result depicted in Figure \ref{fig:state-estimate-SLS}. Here the switching signal is piecewise constant with dwell time of $30$ time samples in each mode.  Again, it can be noticed that the actual state is effectively enclosed by both estimators and that the claimed property of tightness is supported by the empirical evidence.   
\begin{figure}[h!]
	\centering
	\psfrag{time}[][]{\tiny time samples}
	\psfrag{x}[][]{\scriptsize $x_1$} %
	\psfrag{xbarT}{\tiny Tightest} 
	\psfrag{xbar}{\tiny Truncated ($q=1$)} 
	\psfrag{xact}{\tiny Actual $x_1$}
\includegraphics[width=9cm,height=7cm]{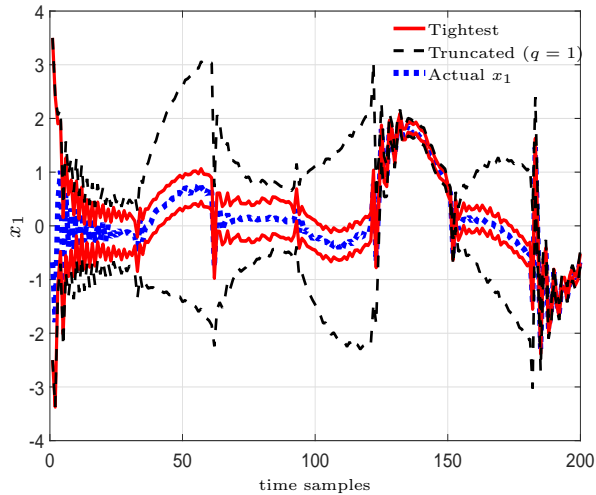}
\caption{Estimation in closed-loop for a SLS: Actual state trajectory resulting from a single simulation (blue dotted); the tightest estimator (red solid) and estimator in observer form \eqref{eq:cx-hat-sls}-\eqref{eq:px-hat-sls} (black dashed). }
\label{fig:state-estimate-SLS}
\end{figure}
\section{Conclusion}\label{sec:Conclusion}
In this paper we have presented a new approach to the interval-valued state estimation problem. The proposed framework is mainly discussed for the case of discrete-time linear systems and later, extended to switched linear systems. In particular, we have derived the tightest interval estimator which enclose all the possible state trajectories generated by discrete-time linear (and switched linear) systems. Such an estimator can, under some conditions, be realized in an LTI state-space form. When this condition fails to hold, alternative over-approximations can be considered. Considering one of those approximations in the form of one-step ahead state-space representation, we propose a method to compute the parameters of the estimator by solving convex feasibility problems. 

\bibliographystyle{abbrv}

\end{document}